\documentclass[11pt]{article} 
\usepackage{fullpage}
\usepackage[latin1]{inputenc} 

\usepackage{amsmath}
\usepackage{amsthm}
\usepackage{amssymb}
\usepackage{graphicx}
\usepackage{pstricks}
\usepackage{hyperref}
\usepackage{authblk}
\usepackage{ifpdf}

\ifpdf
\DeclareGraphicsRule{*}{mps}{*}{}
\fi

\newtheorem{theorem}{Theorem}[section]
\newtheorem{conj}[theorem]{Conjecture}

\newtheorem{lemma}[theorem]{Lemma}
\newtheorem{fact}[theorem]{Fact}
\newtheorem{claim}[theorem]{Claim}
\newtheorem{example}[theorem]{Example}
\newtheorem{remark}[theorem]{Remark}
\theoremstyle{definition}
\newtheorem{defn}[theorem]{Definition}

\DeclareMathOperator{\tw}{\sf{tw}}
\DeclareMathOperator{\pw}{\sf{pw}}
\DeclareMathOperator{\layout}{\sf{Layout}}
\DeclareMathOperator{\MCLA}{\sf{MCLA}}
\DeclareMathOperator{\MLA}{\sf{MLA}}

\newcommand{\osbp}{{\sf BP}^{1{\sf s}}}
\newcommand{\osbwp}{{\sf BWP}^{1{\sf s}}}

\DeclareMathOperator{\SSE}{\sf{SSE}}
\DeclareMathOperator*{\agg}{agg}
\newcommand{\reals}{\mathbb{R}}

\newcommand{\comment}[2]{}

\title{Inapproximability of Treewidth,\\ One-Shot Pebbling, and Related Layout Problems\thanks{Research supported by NSERC.}}
\author{Per Austrin}
\author{Toniann Pitassi}
\author{Yu Wu}
\affil{
  Department of Computer Science\\
  University of Toronto\\
  \texttt{\{austrin,toni,wuyu\}@cs.toronto.edu}
}

\begin{document}

\begin{titlepage}
  \maketitle
  \thispagestyle{empty}
  
  \begin{abstract}
    We study the approximability of a number of graph problems:
    treewidth and pathwidth of graphs, one-shot black (and black-white)
    pebbling costs of directed acyclic graphs, and a variety of
    different graph layout problems such as minimum cut linear
    arrangement and interval graph completion.  We show that, assuming
    the recently introduced Small Set Expansion Conjecture, all of these
    problems are hard to approximate within any constant factor.
  \end{abstract}
\end{titlepage}
  

\section{Introduction}

One of the great accomplishments in the last twenty years in
complexity theory has been the development of ideas that has led to a
deep understanding of the approximability of an astonishing number of
NP-hard optimization problems.  More recently, in the last ten years,
the formulation of the Unique Games Conjecture (UGC) due to Khot
\cite{Khot02} has inspired a remarkable body of work, clarifying the
complexity of many optimization problems, and exposing the central
role of semidefinite programming in the development of approximation
algorithms.

Despite this tremendous progress, for certain expansion problems such
as the $c$-Balanced Separator problem, and graph layout problems such
as the Minimum Linear Arrangement (MLA) problem, their approximation
status remained unresolved.  That is, even assuming the UGC is not
known to be sufficient to obtain hardness of approximation for either
of these problems.  Moreover, the approximability of many other graph
layout problems is similarly unresolved, even under the UGC.
Intuitively this is because the hard instances for these problems seem
to require a certain global structure such as expansion. Typical
reductions for these problems are gadget reductions which preserve
global properties of the unique games instance, such as the lack of
expansion. Therefore, barring radically new types of reductions that
do not preserve global properties, proving hardness for $c$-Balanced
Separator seems to require a stronger version of UGC, where the
instance is guaranteed to have good expansion.

In \cite{RS10}, the Small Set Expansion (SSE) Conjecture was
introduced, and it was shown that it implies the UGC, and that the SSE
Conjecture follows if one assumes that the UGC is true for somewhat
expanding graphs.  In follow-up work by Raghavendra et
al.\ \cite{RST10}, it was shown that the SSE Conjecture is in fact
equivalent to the UGC on somewhat expanding graphs, and that the SSE
Conjecture implies hardness of approximation for $c$-Balanced
Separator and MLA.  In this light, the Small Set Expansion conjecture
serves as a natural unified conjecture that yields all of the
implications of UGC and also hardness for expansion-like problems that
appear to be beyond the reach of the UGC.

In this paper, we study the approximability of a host of such graph
layout problems, including: treewidth and pathwidth of graphs,
one-shot black and black-white pebbling, Minimum Cut Linear
Arrangement (MCLA) and Interval Graph Completion (IGC).
We prove that all of these problems are SSE-hard to
approximate to within any constant factor.
Our main contributions, giving SSE-hardness of approximation for all of
the graph layout problems mentioned above, are described in the following
subsections. For all of these problems, no evidence of hardness of
approximation was known prior to our results.

It should be noted that the status of the SSE Conjecture is very open
at this point.  In particular, by the recent result of Arora et
al.\ \cite{ABS10} (see also subsequent work \cite{BRS11,GS11}), it has
algorithms running in subexponential time.  Still, despite this recent
progress providing negative evidence against the SSE Conjecture, it
remains open, and we think that investigating what open problems in
approximability we can show SSE-hardness for is a worthwhile venture.

\subsection{Width Parameters of Graphs}

The \emph{treewidth} of a graph, introduced by Robertson and Seymour
\cite{rs84,rs86}, is a fundamental parameter of a graph that measures
how close a graph is to being a tree.  The concept is very important
since problems of small treewidth can usually be solved efficiently by
dynamic programming. Indeed, a large body of NP-hard problems
(including all problems definable in monadic second-order logic
\cite{Courcelle90}) are solvable in polynomial time and often even
linear time on graphs of bounded treewidth.  Examples of such
optimization problems include finding the maximum independent set in a
graph, as well as finding Hamiltonian cycles.  In machine learning,
tree decompositions play a key role in the development of efficient
algorithms for fundamental problems such as probabilistic inference,
constraint satisfaction and query optimization. (See the excellent
survey \cite{Bod05} for motivation, including theoretical as well as
practical applications of treewidth.)

The complexity of approximating treewidth is a longstanding open
problem.  Determining the exact treewidth of a graph and producing an
associated optimal tree decomposition (see Definition~\ref{def:tw}) is
known to be NP-hard \cite{ACP87}.  A central open problem is to
determine whether or not there exists a polynomial time constant
factor approximation algorithm for treewidth
(see e.g., \cite{BGHK95,feige,Bod05}).  The current best polynomial time
approximation algortihm for treewidth \cite{feige}, computes the
treewidth $\tw(G)$ within a factor $O(\sqrt{\log \tw(G)})$.  On the
other hand, the only hardness result to date for treewidth shows that
it is NP-hard to compute treewidth within an {\it additive} error of
$n^{\epsilon}$ for some $\epsilon >0$ \cite{BGHK95}.  No hardness of
approximation is known and not even the possibility of a
polynomial-time approximation scheme for treewidth has been ruled out.
In many important special classes of graphs, such as planar graphs
\cite{ST94}, asteroidal triple-free graphs \cite{BT03}, and
$H$-minor-free graphs \cite{feige}, constant factor approximations are
known, but the general case has remained elusive.

On the positive side, there is a large body of literature developing
fixed-parameter algorithms for treewidth.  In particular, when the
runtime is allowed to be exponential in the $\tw(G)$ there are
constant factor approximations.  Furthermore, even exactly determining
the treewidth is fixed-parameter tractable: there is a linear time
algorithm for computing the (exact) treewidth for graphs of constant
treewidth \cite{Bod96}.

A related graph parameter is the so-called \emph{pathwidth}, which can
be viewed as measuring how close $G$ is to a path.  The pathwidth
$\pw(G)$ is always at least $\tw(G)$, but can be much larger.  The
current state of affairs here is similar as for treewidth; though the
current best approximation algorithm only has an approximation ratio
of $O(\sqrt{\log \pw(G)} \log n)$ \cite{feige}, the best hardness
result is NP-hardness of additive $n^\epsilon$ error approximation.

Using the recently proposed \emph{Small Set Expansion} (SSE)
Conjecture \cite{RS10} discussed earlier, we show that both $\tw(G)$
and $\pw(G)$ are hard to approximate within any constant factor.  In
fact, we show something stronger: it is hard to distinguish graphs
with small pathwidth from graphs with large treewidth.  Specifically:

\begin{theorem}\label{thm:treewidth}
  For every $\alpha > 1$ there is a $c > 0$ such that given a graph $G =
  (V,E)$ it is SSE-hard to distinguish between the case when $\pw(G)
  \le c \cdot |V|$ and the case when $\tw(G) \ge \alpha \cdot c \cdot |V|$.

  In particular, both treewidth and pathwidth are SSE-hard to
  approximate within any constant factor.
\end{theorem}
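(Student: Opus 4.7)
The plan is to reduce from the Balanced Separator problem, which Raghavendra, Steurer, and Tulsiani~\cite{RST10} showed to be SSE-hard to approximate within any constant factor. Given an instance $H$ of Balanced Separator, I would construct a graph $G$ from $H$ (via a graph product or gadget composition, rather than taking $G = H$ directly, since expander-like instances of $H$ themselves have large pathwidth) in such a way that the separator guarantee on $H$ translates into a gap between $\pw(G)$ and $\tw(G)$: the existence of a small balanced separator in $H$ should yield an explicit path decomposition of $G$ of width at most $c \cdot |V(G)|$, while the absence of small balanced separators in $H$ should force $\tw(G) \ge \alpha \cdot c \cdot |V(G)|$.

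The soundness (NO) direction is the easier of the two and rests on the classical fact that if $\tw(G) \le k$, then $G$ admits a balanced vertex separator of size at most $k+1$. Contrapositively, if every balanced separator of $G$ has size at least $s$, then $\tw(G) \ge s-1$. Since SSE-hardness of Balanced Separator guarantees that in the NO case every balanced separator of $H$ is large, it remains only to verify that the reduction preserves the ``no small balanced separator'' property in passing from $H$ to $G$, which is typically the easy direction for product or gadget constructions: a small separator of $G$ projects back to a small separator of $H$.

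The completeness (YES) direction is where the bulk of the work lies and is the main obstacle. The existence of a single small balanced separator in $H$ does \emph{not} by itself give a small-pathwidth decomposition of $G$, since pathwidth demands an entire linear sequence of small vertex separators, one between each pair of consecutive bags. Hence the reduction must be engineered so that a YES instance yields not just one separator but a \emph{nested, path-ordered family} of small sets. A natural route is to construct $G$ so that it decomposes along an auxiliary path (for instance, a product with $P_n$, or a blow-up of a path by copies of an $H$-like gadget), with the non-expansion profile of the SSE YES instance ensuring that every cross-sectional cut has few vertices straddling it. The key technical step is then to bound these cross-sectional widths uniformly in terms of the SSE YES parameters and to choose the product parameters so that the pathwidth upper bound matches $c \cdot |V(G)|$ while leaving the $\alpha$-factor slack against the soundness treewidth lower bound. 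Balancing these two constraints, and exhibiting an explicit vertex ordering whose every prefix separator is small, is what I expect to be the most delicate part of the argument.
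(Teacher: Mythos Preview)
Your soundness direction is essentially right and matches the paper: treewidth is lower-bounded by the minimum balanced-separator size (the paper phrases this via the $1/2$-separator number from \cite{BGHK95}), so expansion in the \textbf{No} case forces large treewidth.

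The gap is in completeness, and it stems from your choice of starting problem. Reducing from Balanced Separator black-box gives you, in the \textbf{Yes} case, exactly one small balanced cut of $H$. You correctly note that this is not enough for a path decomposition, but your proposed fix---a product with $P_n$ or a path-of-gadgets construction---does not repair this. A Cartesian product $H \Box P_n$ has natural path bags of size $\Theta(|V(H)|)$, independent of the separator size, so the pathwidth bound you obtain is trivial relative to $|V(G)|$. More generally, any gadget composition that uses only the single promised separator of $H$ cannot produce the nested sequence of small cuts that pathwidth demands: the two halves of $H$ after removing the separator may themselves be expanders, so the recursion you need simply does not exist.

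The paper sidesteps this entirely by reducing not from Balanced Separator but from the \emph{strong form} of the SSE Conjecture (Conjecture~\ref{conj:main sse}). There the \textbf{Yes} instance comes with a full partition $S_1,\ldots,S_q$ into $q = 1/\epsilon$ equi-sized non-expanding pieces, and laying these pieces out consecutively is exactly a linear ordering with small cutwidth (hence small pathwidth after the gadget reduction). The construction itself is not a graph product: it is the bipartite vertex--edge incidence graph of $G$ with each original vertex replicated $r \ge |V|\cdot|E|$ times, which forces any ordering to respect the ``edges before their endpoints'' structure and makes the vertex-cost at each point track the edge-cut of the induced ordering of $G$. So the missing idea is that completeness requires the many-pieces partition promised by strong SSE, not the single cut promised by Balanced Separator.
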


This is the first result giving hardness of (relative) approximation
for these problems, and gives evidence that no constant factor
approximation algorithm exists for either of them.

\subsection{Pebbling Problems}

Graph pebbling is a rich and relatively mature topic in theoretical
computer science.  \emph{Pebbling} is a game defined on a directed
acyclic graph (DAG), where
the goal is to \emph{pebble} the sink nodes of the DAG according to
certain rules, using the minimum number of pebbles.  The rules for
pebbling are as follows.  A \emph{black pebble} can be placed on a
node if all of the node's immediate predecessors contain pebbles, and
can always be removed.  A white pebble can always be placed on a node,
but can only be removed if all of the node's immediate predecessors
contain pebbles.  A pebbling {\it strategy} is a process of pebbling
the sink nodes in a graph according to the above rules.  The
\emph{pebbling cost} of a pebbling strategy is the maximum number of
pebbles used in the strategy.  The black-white pebbling cost of a DAG
is the minimum pebbling cost of all possible pebbling strategies. The
black pebbling cost is the minimum pebbling cost over all pebbling
strategies that only use black pebbles.

Pebbling games were originally devised for studying
programming languages and compiler construction, but have later found
a broad range of applications in computational complexity theory.
Pebbling is a tool for studying the relationship between computation
time and space by means of a game played on directed acyclic graphs.
It was employed to model register allocation, and to
analyze the relative power of time and space as Turing machine
resources.  For a comprehensive recent survey on graph pebbling, see
\cite{Nor10}.

Apart from the cost of a pebbling, another important measure is the
\emph{pebbling time}, which is the number of steps (pebble
placements/removals) performed.  In the context of measuring memory
used by computations, this corresponds to computation time, and hence
keeping the pebbling time small is a natural priority.  The extreme
case of this is what we refer to as \emph{one-shot pebbling}, also
known as progressive pebbling, considered in
e.g.\ \cite{sethi,Lengauer81,KP86}.  In one-shot pebbling, we have the
restriction that each node can receive a pebble only once.  Note that
this restriction can cause a huge increase in the pebbling cost of the
graph \cite{LT82}.  One-shot pebbling is also equivalent to a problem
known as Register Sufficiency \cite{ravi}.

The one-shot pebbling problem is easier to analyze for the following
reasons.  In the original pebbling problem, in order to achieve the
minimum pebbling number, the pebbling time might be required to be
exponentially long, which becomes impractical when $n$ is large.  On
the other hand, the one-shot pebbling problem is more amenable to
complexity theoretic analysis as it minimizes the space used in a
computation subject to the execution time being minimum.  In
particular, the decision problem for one-shot pebbling is in NP
(whereas the unrestricted pebbling problems are PSPACE-complete).

The one-shot black pebbling problem and one-shot black-white pebbling
problems admit an $O(\sqrt{\log n} \log n)$ approximation ratio.  We
show that they are SSE-hard to approximate to within any constant
factor.  For black pebbling we show that this holds for single sink
DAGs with in-degree $2$, which is the canonical setting for pebbling
games (it seems plausible that the black-white hardness can be shown
to hold for this case as well, though we have not attempted to prove
this).

\begin{theorem} \label{thm:osbp}
  It is SSE-hard to approximate the one-shot black pebbling problem
  within any constant factor, even in DAGs with a single sink and
  maximum in-degree $2$.
\end{theorem}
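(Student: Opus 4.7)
The plan is to reduce from pathwidth, whose SSE-hardness was established in Theorem~\ref{thm:treewidth}. Given an undirected graph $G=(V,E)$, I construct a DAG $D(G)$ as follows: for each $v\in V$ create a source $s_v$; for each edge $e=\{u,v\}\in E$ create an ``edge node'' $m_e$ whose two predecessors are $s_u$ and $s_v$; and finally stack an aggregation subgraph on top of the $m_e$'s that culminates in a single sink $t$, with every aggregation node given exactly two predecessors. Thus $D(G)$ automatically has maximum in-degree $2$ and one sink, as required.

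The target is to show that the one-shot black pebbling cost of $D(G)$ coincides with the vertex separation number of $G$ (well known to equal $\pw(G)$) up to an additive $o(n)$ term. Soundness is the clean direction: from any pebbling using $K$ pebbles, let $\sigma$ be the order in which the sources are placed. At the instant right after $s_v$ is pebbled, every $u$ with $\sigma(u)\le\sigma(v)$ that has an incident edge to some $w$ with $\sigma(w)>\sigma(v)$ must still carry its pebble, because the one-shot rule forbids re-pebbling $s_u$ while $m_{\{u,w\}}$ has not yet been placed (its other parent $s_w$ is still unpebbled). These surviving source pebbles witness the vertex separator at cut $\sigma(v)$, so the vertex separation of $G$ is at most $K$. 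Combined with $\pw(G)\ge\tw(G)$ and the gap of Theorem~\ref{thm:treewidth}, this yields the soundness half of the reduction.

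For completeness, the strategy is to pebble sources in the order prescribed by a near-optimal path decomposition, fire each edge node as soon as both its endpoints are live, drain the aggregation subgraph on the fly, and remove a source pebble as soon as every incident edge node has been placed. The main obstacle is the design of the aggregation subgraph so that its pebbling overhead stays $o(n)$: a naive balanced binary tree over the $m_e$'s can accumulate $\Omega(|E|)$ live tree pebbles when leaves arrive in an order unfavorable to the tree, while a single chain would instead force source pebbles to survive too long. I expect the right construction to be a hybrid, grouping edge nodes by vertex into a local aggregation of depth $O(\max_v\deg_G(v))$ and then merging the per-vertex outputs with a balanced binary tree of depth $O(\log n)$, together with a standard bounded-degree preprocessing of the SSE-hard pathwidth instances. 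Proving that this schedule keeps only $o(n)$ aggregation pebbles alive at any moment, despite the leaf-arrival order being dictated by the path decomposition rather than by the tree, is the heart of the argument; since the hard instances have $\pw(G)=\Omega(n)$, any such sublinear overhead is absorbed by the constant-factor gap.
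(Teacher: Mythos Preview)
Your soundness direction is correct, but the completeness direction has a genuine gap that your proposal already flags as ``the heart of the argument'' without resolving. The difficulty is structural: the aggregation subgraph is part of the DAG and hence fixed \emph{before} any path decomposition is chosen, yet the order in which the edge nodes $m_e$ become ready is dictated by that decomposition. A fixed binary tree over $M=|E(G)|$ leaves can require $\Omega(M)$ live pebbles when the leaves arrive in an order incompatible with the tree, and your ``group by vertex'' patch still implicitly depends on knowing which endpoint of each edge is pebbled later (and on ordering the per-vertex aggregates correctly in the top tree)---information you do not have at construction time. Your parameter claim is also off: while Theorem~\ref{thm:treewidth} does state $\pw(G)\ge \alpha c|V|$ in the No case, the constant $c$ must tend to $0$ as the target gap $\alpha$ grows, so ``sublinear in $n$'' overhead is not enough; for the actual instances produced in Section~\ref{sec:undir vertex} the number of edge nodes is vastly larger than $\pw(G)$, so any overhead proportional to $|E(G)|$ destroys the gap.

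The paper sidesteps all of this by orienting the construction the other way and by starting directly from the SSE instance $G$ (via Theorem~\ref{thm:mla-mcla}) rather than from the pathwidth instance. In the DAG of Section~\ref{sec:directed}, the \emph{edges of $G$ are the sources} and the \emph{vertices of $G$ are the sinks}; hardness of $\osbp(D)=\layout(D;V,\max)$ then follows from Lemmas~\ref{lem:DAG-completeness} and \ref{lem:DAG-soundness}. This DAG has in-degree $d$ (the SSE degree), not $2$, but Lemma~\ref{lemma:indeg2} fixes that by replacing each high-indegree node with a pyramid, at additive cost $d$. Crucially there are now only $|V(G)|$ sinks, so the single-sink transformation (Lemma~\ref{lemma:onesink}: put a binary tree over the sinks, pebble $D$ first while keeping sink pebbles, then pebble the tree) costs only $|V(G)|+1$ additively. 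Both $d$ and $|V(G)|$ are small compared to the pebbling cost $\Theta(\epsilon|E(G)|)$, so the gap survives. No interleaving of aggregation with the main pebbling is needed; this is precisely what fails in your orientation, where the pre-aggregation sinks number $|E(G)|$ rather than $|V(G)|$.
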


\begin{theorem} \label{thm:osbwp}
  It is SSE-hard to approximate the one-shot black-white pebbling problem within any constant factor.
\end{theorem}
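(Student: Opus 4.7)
The plan is to reduce from the pathwidth/treewidth hardness of Theorem~\ref{thm:treewidth} directly to one-shot black-white pebbling, rather than trying to augment the black-pebbling reduction of Theorem~\ref{thm:osbp} to kill the power of white pebbles. Starting from a graph $G = (V,E)$ for which it is SSE-hard to distinguish $\pw(G) \le c|V|$ from $\tw(G) \ge \alpha c |V|$, I would build a DAG $D = D(G)$ whose one-shot black-white pebbling cost is $\Theta(\pw(G))$. Since $\pw(G) \ge \tw(G)$, the large-treewidth NO case forces $\pw(G) \ge \alpha c |V|$ as well, so the Theorem~\ref{thm:treewidth} gap transfers to a constant-factor gap in pebbling cost.

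For the construction I would use (a variant of) the bipartite incidence DAG: create a source $x_v$ for each $v \in V$ and a sink $y_e$ for each edge $e = \{u,v\} \in E$, with arcs $x_u \to y_e$ and $x_v \to y_e$. Because every $y_e$ is a sink, every such edge-node must be pebbled in any valid strategy, forcing the pebbling to simultaneously ``touch'' both endpoints of each edge at some moment---exactly the combinatorial feature captured by pathwidth.

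The upper bound (small $\pw(G)$ $\Rightarrow$ cheap pebbling) is the easy direction: a path decomposition $(X_1,\ldots,X_\ell)$ of width $k$ yields a one-shot \emph{black} strategy that places $x_v$ the first time $v$ enters a bag, pebbles each $y_e$ while both endpoints are simultaneously present, and removes $x_v$ when $v$ leaves the decomposition for good. This uses $O(k)$ pebbles, and any black strategy is also a black-white strategy, so $\osbwp(D) = O(\pw(G))$.

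The main obstacle is the converse: extracting a path decomposition of $G$ of width $O(k)$ from a one-shot black-white strategy of cost $k$. The subtlety is that a white pebble on $x_v$ can be placed \emph{before} $v$'s predecessors would normally be required---so the naive bag ``vertices carrying a pebble at time $t$'' can fail the contiguity axiom. I would define the bag at time $t$ to consist of all $v$ such that either $x_v$ currently carries a (black or white) pebble, \emph{or} some white pebble has been placed on a descendant of $x_v$ but not yet discharged, \emph{or} $x_v$ is pending as an ``obligation'' needed to eventually validate a removed white pebble on some $y_e$ already visited. One then checks that (i) every $v$ appears in some bag (because $y_e$ for each incident $e$ must be pebbled, and this requires $x_v$ to be present or obligated), (ii) both endpoints of every edge coappear in the bag at the moment $y_e$ is pebbled, and (iii) the set of times at which each $x_v$ appears forms a contiguous interval, by a careful charging argument that exploits the one-shot restriction (each node is pebbled at most once, so ``obligations'' cannot be reopened). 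The delicate part is arranging the bookkeeping so the bag size is bounded by $O(k)$ rather than a polynomial in $k$; I expect this to be the main technical hurdle and the place where the one-shot constraint is indispensable.
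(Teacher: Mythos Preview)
Your reduction is exactly the paper's: the bipartite incidence DAG you build is Lengauer's $D_G$ (see Section~\ref{sec:pebbling-defs}), and the paper simply cites Lengauer's 1981 result that $\layout(G,V,\max)=\osbwp(D_G)+2$, i.e.\ $\pw(G)=\osbwp(D_G)+2$, as a black box. Theorem~\ref{thm:osbwp} then follows immediately from the SSE-hardness of $\layout(G;V,\max)$ (Theorem~\ref{thm:undir-vertex}, equivalently the pathwidth half of Theorem~\ref{thm:treewidth}).

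So the ``main obstacle'' you anticipate---extracting a width-$O(k)$ path decomposition of $G$ from a cost-$k$ one-shot black-white pebbling of $D_G$---is precisely Lengauer's theorem, and no new bookkeeping argument is required. One small slip in your sketch: the white-pebble subtlety is not on the $x_v$'s (these are sources, so white and black pebbles on them behave identically), but on the sinks $y_e$, where a white pebble creates exactly the deferred ``obligation'' to later have both endpoints present that your bag definition is trying to capture.
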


No hardness of approximation result of any form was known for one-shot
pebbling problems.  We believe that these results can be extended to
obtain hardness for more relaxed versions of bounded time pebbling
costs as well.  We are currently working on this, and have some
preliminary results.

\subsection{The Connection: Layout Problems}
\label{sec:intro:layout}

The graph width and one-shot pebbling problems discussed in
the previous sections may at first glance appear to be unrelated.  However,
both sets of problems are instances of a general family of
problems, known as \emph{graph layout problems}.  In a graph layout
problem (also known as an arrangement problem, or a vertex ordering
problem), the goal is to find an ordering of the vertices, optimizing
some condition on the edges, such as adjacent pairs being close.
Layout problems are an important class of problems that have
applications in many areas such as VLSI circuit design.

A classic example is the \emph{Minimum Cut Linear Arrangement} Problem
(MCLA).  In this problem, the objective is to find a permutation $\pi$
of the vertices $V$ of an undirected graph $G = (V, E)$, such that the
largest number of edges crossing any point,
\begin{equation}
  \label{eq:mcla_intro}
  \max_i |\{ (u,v) \in E | \pi(u) \le i < \pi(v) \}|,
\end{equation}
is minimized.  MCLA is closely related to the \emph{Minimum Linear
  Arrangement} Problem (MLA), in which the $\max$ in
\eqref{eq:mcla_intro} is replaced by a sum.

The MCLA problem can be approximated to within a factor $O(\log n
\sqrt{\log n})$.  To the best of our knowledge, there is no
hardness of approximation for MCLA in the literature.  Its cousin MLA was
recently proved SSE-hard to approximate within any constant factor
\cite{RST10}, and we observe that the same hardness applies to the
MCLA problem.

\begin{theorem} 
  \label{thm:mcla}
  Assuming the SSE Conjecture, Minimum Cut Linear Arrangement is hard to approximate within any constant factor.
\end{theorem}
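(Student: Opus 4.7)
The plan is to transfer the SSE-hardness of MLA established in [RST10] to MCLA, essentially without modifying the reduction. The key ingredient is the elementary sandwich valid for every graph $G$ on $n$ vertices,
$$\frac{\MLA(G)}{n-1} \;\le\; \MCLA(G) \;\le\; \MLA(G),$$
where the left inequality uses that, for any fixed layout, the maximum of the $n-1$ cut sizes is at least their average, and the right inequality uses that the maximum is at most the sum.

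Starting from the [RST10] theorem that for every $\alpha \ge 1$ there is $c > 0$ such that it is SSE-hard to distinguish $\MLA(G) \le c n^2$ from $\MLA(G) \ge \alpha c n^2$, the soundness side passes to MCLA for free: in the NO case the lower bound $\MCLA(G) \ge \MLA(G)/(n-1) = \Omega(\alpha c n)$ follows immediately. The completeness side is the main obstacle, since the trivial bound $\MCLA(G) \le \MLA(G) \le cn^2$ is off by a factor of $n$ from what one needs. I would handle it by inspecting the construction of [RST10] and verifying that the layout it exhibits in the YES case has \emph{every} cut of size $O(cn)$, not merely total $O(cn^2)$. SSE-based MLA reductions typically arrange the vertices along a cyclic or path-like backbone with edges distributed roughly uniformly along the ordering, so each cut is within a constant factor of the average $\MLA/n$; I would verify this directly in the RST10 construction, or else enforce it by a standard symmetrization step (for instance, replacing the instance by a cyclic union of shifted copies, or interleaving with a uniform padding), which cannot inflate the MLA objective by more than a constant factor.

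With this completeness strengthening in hand, the YES and NO thresholds for MCLA become $O(cn)$ and $\Omega(\alpha c n)$ respectively, so the achievable gap is proportional to $\alpha$. Since $\alpha$ can be taken to be an arbitrary constant in the underlying MLA hardness, the same is true of the resulting MCLA gap, proving Theorem~\ref{thm:mcla}. I expect the verification or enforcement of the balanced-cut property in the completeness case to be the only nontrivial step; no new SSE reduction is required.
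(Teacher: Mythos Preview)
Your proposal is correct and essentially identical to the paper's approach. The paper (Theorem~\ref{thm:mla-mcla}) carries out exactly the inspection you describe: on the \textbf{Yes} instance of the strong SSE conjecture, the ordering that concatenates the blocks $S_1,\ldots,S_q$ has \emph{every} cut of size at most $3\epsilon|E|$, so your symmetrization alternative is unnecessary; for soundness the paper uses the expansion property directly (the middle cut of any ordering is $\Omega(\sqrt{\epsilon}|E|)$) rather than your averaging bound $\MCLA(G)\ge \MLA(G)/(n-1)$, but both arguments are equally valid.
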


Another example of graph layout is the \emph{Interval Graph
  Completion} Problem (IGC).  In this problem, the objective is to
find a supergraph $G' = (V, E')$ of $G$ such that $G'$ is an interval
graph (i.e., the intersection graph of a set of intervals on the real
line) and of minimum size.  While not immediately appearing to be a
layout problem, using a simple structural characterization of interval
graphs \cite{RR88} one can show that IGC can be reformulated as
finding a permutation of the vertices that minimizes the sum over the
longest edges going out from each vertex, i.e., minimizing
\begin{equation}
  \label{eq:igc_intro}
  \sum_{u \in V} \max_{(u,v) \in E} \max \{\pi(v) - \pi(u), 0\}.
\end{equation}
See e.g., \cite{CHKR10}.  The current best approximation algorithm for
IGC achieves a ratio of $O(\sqrt{\log n} \log \log n)$ \cite{CHKR10}.
It turns out that the SSE Conjecture can be used to prove
super-constant hardness for this problem as well.

\begin{theorem} 
  \label{thm:igc}
  Assuming the SSE Conjecture, Interval Graph Completion is hard to approximate within any constant factor.
\end{theorem}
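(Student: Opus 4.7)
The plan is to deduce the result from the SSE-hardness of MLA proved by Raghavendra, Steurer and Tulsiani \cite{RST10}, using that their hard instances can be taken to have constant maximum degree. The bridge between the two problems is a trivial per-ordering sandwich inequality. For every permutation $\pi$ of the vertices of a graph $G = (V,E)$ of maximum degree $\Delta$, the IGC objective \eqref{eq:igc_intro} and the MLA objective satisfy
\begin{equation*}
  \mathrm{IGC}(\pi)
  \;=\;
  \sum_{u \in V} \max_{\substack{(u,v) \in E\\ \pi(v) > \pi(u)}} \bigl(\pi(v)-\pi(u)\bigr)
  \;\le\;
  \sum_{u \in V}\sum_{\substack{(u,v) \in E\\ \pi(v) > \pi(u)}} \bigl(\pi(v)-\pi(u)\bigr)
  \;=\;
  \mathrm{MLA}(\pi)
  \;\le\;
  \Delta \cdot \mathrm{IGC}(\pi).
\end{equation*}
The first equality is the reformulation of IGC as a layout problem recalled in Section~\ref{sec:intro:layout}; the middle equality just counts each edge once at its lower endpoint; and the final bound replaces the inner sum over at most $\Delta$ forward neighbors of $u$ by $\Delta$ times the maximum. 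Hence on graphs of bounded maximum degree, IGC and MLA agree up to a constant factor on every ordering, and any $\alpha$-approximation for IGC yields an $\alpha \Delta$-approximation for MLA.

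Given the sandwich, the task reduces to exhibiting an MLA gap with arbitrary constant ratio on bounded-degree instances. First I would inspect the reduction of \cite{RST10}: it starts from a bounded-degree SSE instance and applies a constant-size long-code/noise-graph gadget, so the resulting MLA graph already has degree bounded by a constant depending only on the gadget parameters. The stated MLA-hardness is \emph{within any constant factor}, i.e., for every $\alpha' > 1$ there is an SSE-hard gap of ratio $\alpha'$ between the YES and NO thresholds; setting $\alpha' = \alpha \Delta + 1$ for the fixed $\Delta$ and pushing an alleged $\alpha$-approximation for IGC through the sandwich would yield an $\alpha \Delta$-approximation for MLA, contradicting \cite{RST10}. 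This delivers Theorem~\ref{thm:igc}.

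The main obstacle is the bookkeeping needed to verify that the MLA reduction of \cite{RST10} indeed outputs constant-degree graphs; if their construction does not directly provide this, one can prepend a standard degree-reduction gadget that replaces each high-degree vertex by a small constant-degree expander (or path) with the incident edges redistributed, whose effect on MLA is well known to preserve both the YES and NO thresholds up to constant factors that can be absorbed into the final ratio. Everything else is the elementary sandwich above together with the characterization \eqref{eq:igc_intro} of IGC from \cite{RR88,CHKR10}, so the substantive step is really just ensuring the bounded-degree property of the source MLA hardness.
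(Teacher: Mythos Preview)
Your sandwich inequality is correct, and the route is genuinely different from the paper's. The paper never compares IGC to MLA on the same graph. Instead it uses the identification of IGC with $\layout(G; V, \Sigma)$ and then applies a dedicated reduction (Section~\ref{sec:undir vertex}): from the strong-SSE instance $G$ it builds a bipartite graph $G'$ by creating $r$ copies of each vertex and one node per edge, joining each edge-node to all $r$ copies of its two endpoints. Completeness (Lemma~\ref{lem:undir-completeness}) gives $\layout(G'; V, \Sigma) \le (d+r)\,\MLA(G)$; soundness (Lemma~\ref{lem:undir-soundness}) gives $\layout(G'; V, \Sigma) \ge \Omega(\sqrt{\epsilon}\, r\,|V|\,|E|)$ from the expansion of $G$. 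Choosing $r \ge |V|\cdot|E|$ swamps the $d$ term, so the final gap is $\Omega(1/\sqrt{\epsilon})$ \emph{regardless of the degree $d$ of the SSE instance}.

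That last point is exactly where your argument is exposed. On the SSE graph $G$ itself, your sandwich converts an MLA gap of $\Theta(1/\sqrt{\epsilon})$ into an IGC gap of only $\Theta(1/(d\sqrt{\epsilon}))$, so you need $d = o(1/\sqrt{\epsilon})$. Neither Conjecture~\ref{conj:main sse} as stated nor Theorem~\ref{thm:mla-mcla} bounds $d$ in terms of $\epsilon$; and the reduction in \cite{RST10} from weak to strong SSE uses noise-graph/product constructions whose degree blows up with the parameters, so ``starts from a bounded-degree SSE instance'' is not something you can read off. Your fallback of prepending a degree-reduction gadget is the right instinct and very likely can be pushed through (replacing vertices by constant-degree expanders should preserve the balanced-cut lower bound of Remark~\ref{rem:sse no} up to constants), but checking that both the YES and NO bounds of Theorem~\ref{thm:mla-mcla} survive is real work, not bookkeeping. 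The paper's construction sidesteps the whole issue: the $r$-fold replication manufactures the needed imbalance inside the reduction rather than demanding it of the source instance.
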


There is a distinction in IGC of whether one counts the number of
edges in the final interval graph -- this is the most common
definition -- or whether one only counts the number of edges added to
make $G$ an interval graph (which makes the problem harder from an
approximability viewpoint).  Our result holds for the common
definition and therefore applies also to the harder version.

Theorems~\ref{thm:mcla} and \ref{thm:igc} are just two examples of
layout problems that we prove hardness of approximation for.
By varying the precise objective function and also considering
directed acyclic graphs, in which case the permutation $\pi$ must be a
topological ordering of the graph, one can obtain a wide variety of
graph layout problems.  We consider a set of eight such problems,
generated by three natural variations (see
Section~\ref{sec:layout-defs} for precise details), and show
super-constant SSE-based hardness for all of them in a unified way.
This set of problems includes MLA, MCLA, and IGC, but not problems
such as Bandwidth (but on the other hand, strong NP-hardness
inapproximability results for Bandwidth are already known \cite{DFU11}).
See Table~\ref{table:taxonomy} in Section~\ref{sec:layout-defs} for a
complete list of problems covered.

\begin{theorem}
  \label{thm:layout}
  Assuming the SSE Conjecture, all problems listed in
  Table~\ref{table:taxonomy} (see page~\pageref{table:taxonomy}) are
  hard to approximate to within any constant factor.
\end{theorem}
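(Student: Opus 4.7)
The plan is to derive Theorem~\ref{thm:layout} from a single SSE-to-layout reduction and verify that the resulting gap is preserved under all eight objectives in Table~\ref{table:taxonomy} simultaneously, rather than giving eight separate reductions. The unifying observation is that each of the eight objectives can be written in the form $\min_\pi \agg_v(\agg_e(\pi,v))$, where $\pi$ ranges over permutations of $V(H)$ (undirected variants) or topological orderings of $V(H)$ (directed variants), $\agg_e$ aggregates by sum or maximum over the edges at vertex or cut position $v$, and $\agg_v$ aggregates the per-vertex contributions by sum or maximum. The three binary choices---directed vs.\ undirected, sum vs.\ max for $\agg_e$, sum vs.\ max for $\agg_v$---produce precisely the eight variants.

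First, for each $\alpha>1$ I would build, from an SSE instance, a bounded-degree layout instance $H$ with the following cut-profile dichotomy. Writing $f_\pi(i) = |\{(u,v)\in E(H):\pi(u)\le i<\pi(v)\}|$ for the cut profile of $\pi$, in the YES case some (topological) permutation $\pi^*$ achieves $f_{\pi^*}(i)\le c\cdot|V(H)|$ for every $i$, while in the NO case every $\pi$ satisfies $f_\pi(i)\ge \Omega(\alpha c\cdot|V(H)|)$ at a constant fraction of positions $i$. This is essentially the reduction already used for Theorems~\ref{thm:mcla} and \ref{thm:igc}: the YES-side layout is built from a promised small non-expanding set in the SSE instance, and the NO-side lower bound comes from applying the SSE small-set expansion bound of \cite{RST10} at each cut, after an amplification step that forces the bad cut to persist at a constant fraction of positions.

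Second, I would verify a ``sandwich'' that, on the specific bounded-degree graphs produced, each of the eight objectives equals, up to a constant factor, either $\max_i f_\pi(i)$ (MCLA-style) or $\tfrac{1}{n}\sum_i f_\pi(i)$ (MLA-style), with the IGC-style expression $\sum_u\max_{(u,v)\in E}\max\{\pi(v)-\pi(u),0\}$ pinched between these two. Bounded maximum degree $\Delta$ is what lets the vertex-indexed aggregations differ from the cut-indexed ones by at most a factor of $\Delta$, and the same holds for the directed variants once the reduction is oriented so that only the intended topological orderings are feasible (a Hamiltonian backbone through the gadget suffices). Feeding the dichotomy from the previous paragraph into this sandwich yields a gap of $\Omega(\alpha)$ between the YES and NO values of every one of the eight objectives, completing the proof.

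The main obstacle will be proving the soundness direction of this sandwich \emph{uniformly}: for objectives combining an outer max with an inner sum (MCLA-like) one needs \emph{some} cut to contain $\Omega(\alpha c\cdot|V(H)|)$ distinct crossing edges, while for an outer sum with an inner max (IGC-like) one needs bad cuts to occur at a constant fraction of positions, and the mixed cases require both. The reduction is designed precisely so that bad cuts of $\Omega(\alpha c\cdot|V(H)|)$ distinct edges appear at a constant fraction of cut positions in every layout; this common ingredient is what drives all eight bounds in parallel, and once it is established the case analysis over $(\agg_v,\agg_e)$ and orientation is routine.
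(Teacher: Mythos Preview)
Your proposal has a genuine gap in the soundness direction. The sandwich you rely on---that on a graph of maximum degree $\Delta$ the vertex-count and edge-count objectives agree up to a factor of $\Delta$---does hold pointwise (indeed $|V_i(\pi)| \le |E_i(\pi)| \le \Delta\,|V_i(\pi)|$), but it costs a factor of $\Delta$ in the hardness gap. The SSE instance is $d$-regular with no a priori bound on $d$, and the gap in Theorem~\ref{thm:mla-mcla} between \textbf{Yes} and \textbf{No} is only $O(\epsilon|E|)$ versus $\Omega(\sqrt{\epsilon}|E|)$, a ratio of $\Theta(1/\sqrt{\epsilon})$. Dividing the \textbf{No} side by $d$, as the sandwich forces you to do for the vertex-count objectives, destroys the gap unless $d = o(1/\sqrt{\epsilon})$, which you have no way to ensure. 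Your assertion that the reduction produces a ``bounded-degree layout instance $H$'' is never substantiated: neither the SSE instance itself nor the constructions behind Theorems~\ref{thm:mcla} and~\ref{thm:igc} have degree bounded by a universal constant.

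The paper avoids this loss by \emph{not} attempting a single sandwiched reduction. It handles the vertex-count objectives via separate constructions (Sections~\ref{sec:directed} and~\ref{sec:undir vertex}) that introduce a new node for every edge of $G$, arranged so that counting left-endpoint vertices in the new instance directly counts crossing edges of $G$, with no degree-dependent loss. The directed and undirected vertex problems then require genuinely different devices---orientation of the edge--vertex incidences in the DAG case, $r$-fold duplication of each vertex in the undirected case---to force edge-nodes to precede their incident vertex-nodes in any cheap layout; a single ``Hamiltonian backbone'' does not address the undirected case, where no orientation is available. Separately, your parameterization of the eight problems as choices of inner and outer aggregators does not match the taxonomy of Section~\ref{sec:layout-defs}: the second axis is edge-count versus vertex-count at each cut position, not an inner sum versus max, so the uniform case analysis you outline does not line up with the objectives in Table~\ref{table:taxonomy}.
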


Let us now return to the problems discussed in the previous sections.
It should not be surprising that the one-shot black pebbling problem
is equivalent to a graph layout problem: the one-shot constraint
reduces the problem to determining in which order to pebble the
vertices; such an ordering induces a pebbling strategy in an obvious
way.  For the black-white case, it is known that the one-shot
black-white pebbling cost of $D$ is interreducible with a layout
problem on an undirected graph $G$.  Both of these layout problems are
included in the set of problems we show hardness for, so
Theorems~\ref{thm:osbp} and \ref{thm:osbwp} follow immediately from
Theorem~\ref{thm:layout}.

Turning to the width parameters, treewidth is equivalent to a graph
layout problem called elimination width.  Here the objective function
is somewhat more intricate than in the set of basic layout problems we
consider in Theorem~\ref{thm:layout}, but we are able to extend those
results to hold also for elimination width.  Pathwidth is also known
to be equivalent to a certain graph layout problem, and in fact is
equivalent to the layout problem which one-shot black-white pebbling
reduces to.  We use these connections to prove the hardness of
approximation for both treewidth and pathwidth, thereby obtaining
Theorem~\ref{thm:treewidth}.

\subsection{Previous Work}

As the reader may have noticed, for all the problems mentioned, the
best current algorithms achieve similar poly-logarithmic approximation
ratios.  Given their close relation, this is of course not surprising.
Most of the algorithms are obtained by recursively applying some
algorithm for the $c$-balanced separator problem, in which the
objective is to find a bipartition of the vertices of a graph such
that both sides contain a $c$ fraction of vertices, and the number of
edges crossing the partition is minimized.

In the pioneering work on separators by Leighton and Rao
\cite{Leighton}, an $O(\log n)$ approximation algorithm for
$c$-balanced separator was given, which was used to design $O(\log^2
n)$ approximation algorithm for a number of graph layout problems such
as MLA, MCLA, and Register Sufficiency.  Later, \cite{RR98} improved
the approximation algorithm for MLA to a ratio $O(\log n \log \log
n)$, using a spreading metric method.  In the groundbreaking work of
Arora et al.~\cite{arv}, semidefinite programming was used to give an
improved approximation ratio of $O(\sqrt{\log n})$ for $c$-balanced
separator.  Using their ideas, improved algorithms for ordering
problems have been found, such as the $O(\sqrt{\log n} \log \log n)$
approximation algorithm for IGC and MLA \cite{CHKR10}, the
$O(\sqrt{\log n})$ approximation algorithm for treewidth \cite{feige}
and the $O(\sqrt{\log n} \log n)$ approximation algorithm for
pathwidth \cite{feige}.

It is known that the register sufficiency problem (also known as
one-shot black pebbling) admits a $O(\log^2 n)$ approximation
algorithm \cite{ravi}.  We observe that by plugging in the improved
approximation algorithm for direct vertex separator
\cite{directed-edge} into the algorithm in \cite{ravi}, one can
improve this to an $O(\sqrt{\log n}\log n)$ approximation algorithm.

Again, in these algorithms, the approximation algorithm for
$c$-balanced separator plays a key role.  An improved algorithm for
$c$-balanced separator will also improve the approximation algorithms
for the other problems.  On the other hand, hardness of approximating
$c$-balanced separator \cite{RST10} does not necessarily imply
hardness of approximating layout problems.

On the hardness side, our work builds upon the work of \cite{RST10},
which showed that the SSE Conjecture implies superconstant hardness of
approximation for MLA (and for $c$-balanced separator).  The only
other hardness of relative approximation that we are aware of for
these problems is a result of Amb\"uhl et al.~\cite{AMS07}, showing
that MLA does not have a PTAS unless NP has randomized subexponential
time algorithms.

\subsection{Organization}

The outline for the rest of the paper is as follows.  In
Section~\ref{sec:defs}, we formally define the layout problems studied
as well as treewidth and pathwidth.  Section~\ref{sec:overview} gives
an overview of the reductions used.  Then Section~\ref{sec:layout}
gives the reductions proving Theorem~\ref{thm:layout},
Section~\ref{sec:treewidth} we use that to prove
Theorem~\ref{thm:treewidth}, and in Section~\ref{sec:pebbling
  postprocess} we give some additional reductions for our pebbling
instances in order to achieve indegree $2$ and single sinks, as
promised in Theorem~\ref{thm:osbp}.  Finally we end with some
concluding remarks and open problems in Section~\ref{sec:conclusion}.

\section{Definitions and Preliminaries} \label{sec:defs}

\subsection{Graph Layout Problems}
\label{sec:layout-defs}

In this section, we describe the set of graph layout problems that we
consider.  A problem from the set is described by three parameters,
giving rise to several different problems.  These three parameters are
by no means the only interesting parameters to consider (and some of
the settings give rise to more or less uninteresting layout problems).
However, they are sufficient to capture the problems we are interested
in except treewidth, which in principle could be incorporated as
well though we refrain from doing so in order to keep the definitions
simple (see Section~\ref{sec:treewidth-defs} for more details).

First a word on notation.  Throughout the paper, $G = (V, E)$ denotes
an undirected graph, and $D = (V,E)$ denotes a directed (acyclic)
graph.  Letting $n$ denote the number of vertices of the graph, we are
interested in bijective mappings $\pi: V \rightarrow [n]$.  We say
that an edge $(u,v) \in E$ \emph{crosses} point $i \in [n]$ (with
respect to the permutation $\pi$, which will always be clear from
context), if $\pi(u) \le i < \pi(v)$.

We consider the following variations:
\begin{enumerate}

\item \textbf{Undirected or directed acyclic}: In the case of an
  undirected graph $G$, any ordering $\pi$ of the vertices is a
  feasible solution.  In the case of a DAG $D$, only the
  topological orderings of $D$ are feasible solutions.

\item \textbf{Counting edges or vertices}: for a point $i \in [n]$ of the
  ordering, we are interested in the set $E_i(\pi)$ of edges crossing this
  point.  When counting edges, we use the cardinality of $E_i$ as our
  basic measure.  When counting vertices, we only count the set of
  vertices $V_i$ to the left of $i$ that are incident upon some edge
  crossing $i$.  In other words, $V_i$ is the projection of $E_i(\pi)$ to
  the left-hand side vertices.  Formally:
  \begin{align*}
    E_i(\pi) &= \{ e \in E \,|\, \pi(u) \le i < \pi(v) \text{ where $e = (u,v)$}\} \\
    V_i(\pi) &= \{ u \in V \,|\, \pi(u) \le i < \pi(v) \text{ for some $(u,v) \in E$}\}
  \end{align*}
  We refer to $|E_i(\pi)|$ or $|V_i(\pi)|$ (depending on whether we are counting
  edges or vertices) as the \emph{cost} of $\pi$ at $i$.

\item \textbf{Aggregation by sum or max}: given an ordering $\pi$, we
  aggregate the costs of each point $i \in [n]$, by either summation
  or by taking the maximum cost.
\end{enumerate}

Given these choices, the objective is to find a feasible ordering
$\pi$ that minimizes the aggregated cost.

\begin{defn}(Layout value)
  For a graph $H$ (either an undirected graph $G$ or a DAG $D$), a
  cost function $C$ (either $E$ or $V$), and an aggregation function
  $\agg: \reals^* \rightarrow \reals$ (either $\Sigma$ or $\max$), we define
  $\layout(H; C, \agg)$ as the minimum aggregated cost over all feasible orderings of $H$.  Formally:
  $$
  \layout(H; C, \agg) = \min_{\textrm{feasible $\pi$}} \, \agg_{i \in [n]} |C_i(\pi)|.
  $$
\end{defn}

\begin{example}
  $$\layout(G; E, \max) = \min_{\pi}
  \max_{i \in [n]} |E_i(\pi)|,$$ where $\pi$ ranges over all orderings
  of $V(G)$.  This we recognize from Section~\ref{sec:intro:layout} as
  the Minimum Cut Linear Arrangement value of $G$.  
\end{example}

\begin{example}
  $$\layout(D; V, \max) = \min_{\pi} \max_{i \in [n]}
  |V_i(\pi)|,$$ where $\pi$ ranges over all topological orderings of
  the DAG $D$.  As we shall see in Section~\ref{sec:pebbling-defs}, this is
  precisely the One-Shot Black Pebbling cost of $D$.
\end{example}

Combining the different choices gives rise to a total of eight layout
problems (some more natural than others).  Several of these appear in
the literature under one or more names, and some turn out to be
equivalent\footnote{Here, we consider two optimization problems equivalent if there are reductions between them that change the objective values by at most an additive constant.  } to problems that at first sight appear to be different.  We
summarize some of these names in Table~\ref{table:taxonomy}.  In some
cases the standard definitions of these problems look somewhat
different than the definition given here (e.g., for pathwidth,
one-shot pebblings, and interval graph completion).  For the
pebbling and pathwidth problems, we discuss these equivalences
of definitions in the following two sections.

\begin{table}
  \centering
  \begin{tabular}{|c|c|c||p{9cm}|}
    \hline
    \multicolumn{3}{|c||}{\textbf{Problem}} & \multicolumn{1}{|c|}{\textbf{Also known as / Equivalent with }} \\
    \hline
    \hline
    undir. & edge & sum & Minimum/Optimal Linear Arrangement  \\
    \hline
    undir. & edge & max & Minimum Cut Linear Arrangement \newline CutWidth \\
    \hline
    undir. & vertex & sum & Interval Graph Completion \newline SumCut \\
    \hline
    undir. & vertex & max & Pathwidth \newline One-shot Black-White Pebbling \newline Vertex Separation \\
    \hline
    DAG & edge & sum & Minimum Storage-Time Sequencing\newline Directed MLA/OLA   \\
    \hline
    DAG & edge & max &       \\
    \hline
    DAG & vertex & sum &      \\
    \hline
    DAG & vertex & max &  One-shot Black Pebbling  \newline Register Sufficiency\\
    \hline
  \end{tabular}
  \caption{Taxonomy of Layout Problems}
  \label{table:taxonomy}
\end{table}

For interval graph completion, recall from
Section~\ref{sec:intro:layout} that the objective is to minimize
\begin{equation*}
  \sum_{u \in V} \max_{(u,v) \in E} \max \{\pi(v) - \pi(u), 0\}.
\end{equation*}
In other words, we are counting the longest edge going to the right
from each point $i$.  If the length of this edge is $l$ then the edge
contributes $1$ to $V_{i}(\pi), \ldots, V_{i+l-1}(\pi)$ and hence the
objective can be rewritten as
\begin{equation*}
  \sum_{u \in V} |V_i(\pi)|,
\end{equation*}
so that Interval Graph Completion is precisely $\layout(G; V, \Sigma)$.

\subsection{Treewidth, Elimination Width, and Pathwidth}
\label{sec:treewidth-defs}

\begin{defn}[Tree decomposition, Treewidth] \label{def:tw}
Let $G=(V,E)$ be a graph, $T$ a tree, and let $\mathcal{V} = (V_t)_{t\in T}$ be a family of vertex sets $V_t \subseteq V$ indexed by the vertices $t$ of $T$. The pair $(T,\mathcal{V})$ is called a \emph{tree decomposition} of $G$ if it satisfies the following three conditions:
\begin{enumerate}
 \item[(T1)] $V= \cup_{t\in T} V_t$;
\item[(T2)] for every edge $e \in E$, there exists a $t\in T$ such that both endpoints of $e$ lie in $V_t$;
\item[(T3)] for every vertex $v \in V$, $\{t \in T \,|\,v \in V_t\}$ is a subtree of $T$'.
\end{enumerate}

The width of $(T,\mathcal{V})$ is the number $ \max \{|V_t|-1 \,|\, t\in T\},$
and the \emph{treewidth} of $G$, denoted $\tw(G)$,  is the minimum width of any tree decomposition of $G$.
\end{defn}

\begin{defn}
Let $G = (V,E)$ be a graph, and let $v_1, \ldots, v_n$ be some
ordering of its vertices.  Consider the following process: for each
vertex $v_i$ in order, add edges to turn the neighborhood of $v_i$
into a clique, and then remove $v_i$ from $G$.  This is an
\emph{elimination ordering} of $G$.  The \emph{width} of an
elimination ordering is the maximum over all $v_i$ of the degree of
$v_i$ when $v_i$ is eliminated.  The \emph{elimination width} of $G$
is the minimum width of any elimination order.
\end{defn}

\begin{theorem}[See e.g., \cite{Bod07}]
  For every graph $G$, the elimination width of $G$ equals $\tw(G)$.
\end{theorem}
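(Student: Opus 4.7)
The plan is to prove the two inequalities $\tw(G) \le \text{elim-width}(G)$ and $\tw(G) \ge \text{elim-width}(G)$ separately, both by direct construction.

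For the direction $\tw(G) \le \text{elim-width}(G)$, I would take an elimination ordering $v_1, \ldots, v_n$ of width $k$ and build a tree decomposition whose bags are indexed by the vertices. For each $i$, define $B_i$ to consist of $v_i$ together with its neighbors at the moment $v_i$ is eliminated (that is, its original neighbors plus all fill edges added while processing $v_1, \ldots, v_{i-1}$, restricted to vertices still present). By the width assumption, $|B_i| \le k+1$. To turn the $B_i$ into a tree, for each $i<n$ I would attach $B_i$ to $B_j$ where $j = \min\{\ell > i \,|\, v_\ell \in B_i\}$; this yields a rooted tree with $B_n$ at the root. Condition (T1) is immediate since $v_i \in B_i$, and (T2) follows because for any edge $(u,v)$ of $G$, if $u$ is eliminated first then $v \in B_{\pi(u)}$.

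The second direction, $\tw(G) \ge \text{elim-width}(G)$, is the cleaner one. Given a tree decomposition $(T,\mathcal{V})$ of width $k$, I would repeatedly pick a leaf $t$ of $T$ with parent $t'$, eliminate (in any order) the vertices of $V_t \setminus V_{t'}$, then delete $t$ from $T$ and delete the just-eliminated vertices from every remaining bag. When a vertex $v \in V_t \setminus V_{t'}$ is eliminated, condition (T2) forces all its current neighbors to lie inside $V_t$, so its elimination degree is at most $|V_t| - 1 \le k$. Adding fill edges among those neighbors only adds edges inside the single bag $V_t$, so the tree decomposition property is preserved for the reduced graph. When $T$ consists of a single node, the remaining graph has at most $k+1$ vertices and can be eliminated trivially within width $k$.

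The main obstacle is verifying the subtree condition (T3) in the first direction. One has to argue that for each vertex $v \in V$, the set of indices $i$ with $v \in B_i$ is a contiguous path in the tree constructed above. The key invariant is that if $v \in B_i$ and $v \ne v_i$, then $v$ is still present when $v_i$ is eliminated, so $v$ belongs to the "parent" bag $B_j$ by the definition of $j$ (the next-eliminated member of $B_i$); induction on the elimination order then shows that the bags containing $v$ form a connected path from $B_{\pi(v)}$ up to the root. This is the only piece that requires genuine care; everything else reduces to unpacking definitions. Since the result is classical and the reference \cite{Bod07} is cited, the writeup can be kept to these two constructions with the (T3) invariant highlighted.
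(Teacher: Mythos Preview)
The paper does not prove this theorem; it is stated with the citation to \cite{Bod07} and used as a black box, so there is no in-paper proof to compare against.

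Your proposal is the standard argument and is essentially correct. One small slip in the (T3) verification: you write that the bags containing $v$ form ``a connected path from $B_{\pi(v)}$ up to the root''. That is not quite right. What your invariant actually shows is that if $v \in B_i$ with $i < \pi(v)$, then $v \in B_j$ for the parent $j$, and moreover $i < j \le \pi(v)$ (since $v = v_{\pi(v)} \in B_i \setminus \{v_i\}$ and $j$ is the minimum such index). Iterating, the walk from any such $B_i$ toward the root stays in bags containing $v$ and terminates at $B_{\pi(v)}$, not at the root $B_n$; and different starting bags $B_i$ may follow different branches, so the set $\{i : v \in B_i\}$ is a connected \emph{subtree} with top node $B_{\pi(v)}$, not in general a path. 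This is exactly what (T3) requires, so the argument is fine once the phrasing is corrected. The second direction is clean as stated; just make the invariant explicit that after each leaf removal the remaining pair $(T,\mathcal{V})$ is a tree decomposition of the current filled graph.
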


Thus treewidth is another example of a layout problem.  In principle
this layout problem can be formulated in the framework of
Section~\ref{sec:layout-defs}, but the choice of cost function is now
more involved than the vertex- and edge-counting considered there.

\begin{defn}[Path decomposition, Pathwidth] \label{pw}
Given a graph $G$, we say that $(T,\mathcal{V})$ is a \emph{path
  decomposition} of $G$ if it is a tree decomposition of $G$ and $T$
is a path.  The \emph{pathwidth} of $G$, denoted $\pw(G)$,  is the minimum width of any path decomposition of $G$.
\end{defn}

As claimed earlier, pathwidth is in fact equivalent with a graph
layout problem:

\begin{theorem}[\cite{nancy}]
  For every graph $G$, we have $\pw(G) = \layout(G; V, \max)$, also
  known (among many other names) as the ``vertex separation'' number
  of $G$.
\end{theorem}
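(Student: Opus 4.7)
My plan is to establish the equivalence by proving both $\pw(G) \le \layout(G; V, \max)$ and the reverse inequality, giving in each direction an explicit construction that converts between orderings and path decompositions.

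For $\pw(G) \le \layout(G; V, \max)$, I take an optimal ordering $\pi = (v_1, \ldots, v_n)$ realizing $k = \max_i |V_i(\pi)|$, and define the bags
\[
B_i \;=\; V_{i-1}(\pi) \cup \{v_i\}, \qquad i = 1, \ldots, n,
\]
with the convention $V_0(\pi) = \emptyset$. Since $\pi(v_i) = i > i-1$, the vertex $v_i$ is not in $V_{i-1}(\pi)$, so $|B_i| \le k+1$ and the width is at most $k$. The decomposition axioms are then straightforward: (T1) because $v_i \in B_i$; (T2) because an edge $(u,v)$ with $\pi(u) < \pi(v) = j$ has $u \in V_{j-1}(\pi)$ and $v = v_j$, so both lie in $B_j$; and (T3) because if $v$ sits at position $p$ and its right-most neighbor is at position $R > p$, then $v \in B_i$ exactly for $i \in [p, R]$, while if $v$ has no right neighbor then $v \in B_i$ only for $i = p$---a contiguous range in either case.

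For the reverse inequality, given a path decomposition $B_1, \ldots, B_m$ of width $k$, let $f(v) = \min\{j : v \in B_j\}$ and $L(v) = \max\{j : v \in B_j\}$, and let $\pi$ be any ordering of $V$ with $\pi(u) < \pi(v)$ whenever $f(u) < f(v)$ (ties broken arbitrarily). To bound $|V_i(\pi)|$, set $t^* = f(v_{i+1})$. For any $u \in V_i(\pi)$ there is an edge $(u, v)$ with $\pi(v) > i$, hence $f(v) \ge f(v_{i+1}) = t^*$. By (T2), some bag $B_j$ contains both $u$ and $v$, which forces $L(u) \ge j \ge f(v) \ge t^*$; and since $\pi(u) \le i$ the ordering gives $f(u) \le f(v_{i+1}) = t^*$. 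Contiguity (T3) then puts $u \in B_{t^*}$. Since also $v_{i+1} \in B_{t^*}$ but $v_{i+1} \notin V_i(\pi)$, we conclude $|V_i(\pi)| + 1 \le |B_{t^*}| \le k+1$, as required.

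The main obstacle is merely careful bookkeeping: keeping straight the indexing between bag positions $1, \ldots, m$ in the path decomposition and the layout positions $1, \ldots, n$, and verifying that the ordering induced by $f$ in the reverse direction interacts properly with the contiguity condition. No genuinely new ideas are needed beyond the two constructions above; this is a classical equivalence whose proof reduces to a direct translation in each direction.
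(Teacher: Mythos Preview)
Your proof is correct and is essentially the standard argument for this classical equivalence. Note, however, that the paper does not actually prove this theorem: it is stated with a citation to \cite{nancy} (Kinnersley's result that the vertex separation number equals the pathwidth) and no proof is given in the paper itself. So there is no ``paper's own proof'' to compare against; you have supplied the standard proof that the cited reference contains.
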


\subsection{Pebbling Problems}
\label{sec:pebbling-defs}

In this section we define pebbling problems and their one-shot versions.

\begin{defn} (Pebbling Configurations)
  Let $D=(V,E)$ be a directed acyclic graph (DAG).  A \emph{pebbling
    configuration} of $D$ is a pair $(B,W)$ of (disjoint) subsets vertices
  (representing the set $B$ of vertices that have black pebbles, and
  the set $W$ of vertices that have white pebbles on them).
\end{defn}

\begin{defn} (Black and Black-White Pebbling Strategies) \label{defn2}
Let $D=(V,E)$ be a directed acyclic graph.
A \emph{black-white pebbling strategy} for $D$ is a sequence of pebble
configurations $\mathcal{P} = \{P_0, \ldots, P_{\tau} \}$ such that:
\begin{enumerate}
\item[(i)] the first and last
configurations contain no pebbles; that is $P_0 = P_{\tau} = (\emptyset,\emptyset)$. 
\item[(ii)] each sink vertex $u$ of $D$ is pebbled at least once, i.e., there is some $P_t = (B_t, W_t)$ such that $u \in B_t \cup W_t$.
\item[(iii)] each configuration follows
  from the previous configuration by one of the following rules:
  \begin{enumerate}
  \item A black pebble can be removed from a vertex.
  \item A black pebble can be placed on a pebble-free vertex $v$ if all of the immediate predecessors of $v$ are pebbled.
  \item A white pebble can be placed on a pebble-free vertex.
  \item A white pebble can be removed from a vertex $v$ if all of the immediate predecessors of $v$ are pebbled.
  \end{enumerate}
\end{enumerate}
A black pebbling strategy for $G$ is a black-white pebbling strategy in which no white pebbles are used.
\end{defn}

The \emph{cost} of a pebbling strategy is
$cost(\mathcal{P})=\max_{0\leq t\leq \tau} \{|B_t \cup W_t|\}$.  The
black-white pebbling cost of $D$ is the minimum cost of any
black-white pebbling strategy of $D$, and similarly the black pebbling
cost of $D$ is the minimum cost of any black pebbling strategy of $D$.

\begin{defn} \label{defn4} (One-Shot Black and One-Shot Black-White Pebbling)
  A \emph{one-shot black (resp.\ black-white) pebbling strategy} is a
  black (resp.\ black-white) pebbling strategy in which each node is
  only pebbled once.  The one-shot black (resp.\ black-white) pebbling
  cost of $D$, denoted $\osbp(D)$ (resp.\ $\osbwp(D)$) is the minimum
  cost of any one-shot black (resp.\ black-white) pebbling strategy of
  $D$.
\end{defn}

As mentioned in Table~\ref{table:taxonomy}, the one-shot pebbling
problems can be formulated as $\layout$ problems.

\begin{lemma} 
For every DAG $D=(V,E)$, we have $\osbp(D) = \layout(D,V,\max)$.
\end{lemma}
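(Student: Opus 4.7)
The plan is to prove the two inequalities $\layout(D,V,\max) \le \osbp(D)$ and $\osbp(D) \le \layout(D,V,\max)$ separately, matching each one-shot black pebbling strategy with a topological ordering of $D$ and vice versa.

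For the first direction, I would start with any one-shot black pebbling strategy $\mathcal{P} = (P_0, \ldots, P_\tau)$ of cost $c$ and let $\pi$ be the order in which the vertices are (uniquely) placed. Because placing a pebble on a vertex requires all of its predecessors to carry pebbles simultaneously, $\pi$ is automatically a valid topological ordering of $D$. Fix any $i \in [n]$ and let $t_i$ be the time step immediately after $v_i = \pi^{-1}(i)$ receives its pebble. I claim $V_i(\pi) \subseteq B_{t_i}$: take any $u \in V_i(\pi)$, so $\pi(u) \le i$ and $u$ has an out-neighbor $v$ with $\pi(v) > i$. Since the strategy is one-shot, $u$'s pebble cannot be removed and later reinstated, so it must still be on the board at time $t_i$ in order to enable the future placement of $v$. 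Hence $|V_i(\pi)| \le |B_{t_i}| \le c$, and taking the maximum over $i$ yields $\layout(D,V,\max) \le c = \osbp(D)$.

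For the reverse inequality, I would start from an optimal topological ordering $\pi$ with $\max_i |V_i(\pi)| = L$ and build a pebbling strategy that processes vertices in the order $\pi^{-1}(1), \pi^{-1}(2), \ldots, \pi^{-1}(n)$. At step $i$, the strategy places a black pebble on $v_i = \pi^{-1}(i)$ and then removes the pebble from every vertex whose out-neighbors have now all been placed. A short induction shows that the set of pebbles on the board at the end of step $i$ is exactly $V_i(\pi)$, which in particular contains all predecessors of $v_{i+1}$ (so that the next placement is legal) and has size at most $L$.

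The main delicate point is the transient pebble count during step $i$: the placement of $v_i$ momentarily bumps the count up before the subsequent removals of no-longer-needed predecessors bring it back down. One must argue that this momentary peak is still bounded by $L$, for instance by carefully interleaving placements and removals or by an inspection of which vertices can lie in $V_{i-1}(\pi)$ but not in $V_i(\pi)$. Once this bookkeeping is handled, the two inequalities combine to give $\osbp(D) = \layout(D,V,\max)$.
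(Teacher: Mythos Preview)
Your two-direction argument is exactly the paper's: from an optimal one-shot pebbling, take the placement order $\pi$ and observe that $V_i(\pi)\subseteq B_{t_i}$; from an optimal ordering, pebble in that order and remove a pebble only when all of its successors have been pebbled. The paper in fact glosses over the very transient-count point you flag, simply asserting that ``after pebbling $\pi^{-1}(i)$, the number of pebbles on the graph is $|V_i(\pi)|$'' without examining the configuration between the placement and the subsequent removals.

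Your caution there is warranted, and the issue cannot actually be ``handled'' away. Immediately after placing $v_i$ the pebble set is $V_{i-1}(\pi)\cup\{v_i\}$, and every vertex of $V_{i-1}(\pi)\setminus V_i(\pi)$ is necessarily a predecessor of $v_i$, so no removal can legally precede the placement; the peak at step $i$ is therefore $|V_{i-1}(\pi)|+1$. Already for a single directed edge one gets $\osbp(D)=2$ while $\layout(D,V,\max)=1$. Under the paper's pebbling rules (no sliding move) the correct identity is $\osbp(D)=\layout(D,V,\max)+1$; this additive constant is of course irrelevant for the constant-factor inapproximability the lemma feeds into, so both your argument and the paper's are perfectly adequate for that purpose.
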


\begin{proof}
Suppose $\pi$ is the optimal ordering of $\layout(D,V,\max)$, we pebble the vertices according to $\pi$.  We remove a pebble from vertex $u$ if and only if all of the successors of $u$ are pebbled. Since $\pi$ is a topological order of $D$, this is a valid pebbling strategy. 
It is easy to verify that after pebbling $\pi(i)$, the number of pebbles on the graph is $|V_i(\pi)|$. Therefore the number of pebbles used in the above strategy is $\layout(D,V,\max)$. 
On the other hand, suppose $\Gamma$ is the optimal pebbling strategy, let $\sigma$ be the  ordering of vertices to receive a pebble in $\Gamma$. We consider the number of pebbles on the graph after pebbling the $i$-th vertex in $\sigma$. For any vertex $u$ that has a pebble, if the vertex has a successor that has not yet be pebbled, then the pebble on $u$ cannot be removed, since $u$ cannot be pebbled again. Therefore the number of pebbles on the graph is at least $V_i(\sigma)$. 
Thus $\osbp(D) \ge \max_{i \in [n]} |V_i(\sigma)| \geq \layout(D,V,\max) $. 
\end{proof}

For one-shot black-white pebbling, we have the following reductions by
Lengauer \cite{Lengauer81}, showing that one-shot black-white pebbling
is equivalent to the undirected max-vertex layout problem.

\begin{lemma}[\cite{Lengauer81}]
  For a given DAG $D=(V,E)$, let $G_D = (V, E_D)$ be an undirected
  graph with $E_D = \{(v,w) \,|\, (v,w)\in E\} \cup \{ (v,w) \,|\,
  \exists u, (v,u),(w,u) \in E\}$.  Then $$\osbwp(D) = \layout(G_D, V,
  \max) - 1.$$
\end{lemma}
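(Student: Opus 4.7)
The plan is to give two constructive reductions between one-shot black-white pebblings of $D$ and vertex orderings of $G_D$, where in both directions the ordering is taken to be the sequence in which each vertex is first (and, by the one-shot restriction, only) pebbled. The correspondence is tight up to a universal additive constant, which traces back to the peak pebble count at the instant a new pebble is first laid down alongside its still-needed predecessors.

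For the upper bound direction, I would take an ordering $\pi$ of $V$ with $\max_i |V_i(\pi)| = k$ and build a one-shot strategy that pebbles vertices in the order $\pi$, colouring $u$ \emph{black} if every $D$-predecessor of $u$ already carries a pebble when $u$ is placed (equivalently, every $D$-predecessor of $u$ has a smaller $\pi$-value) and \emph{white} otherwise, removing each pebble as early as the rules permit. The heart of the argument is the invariant that the vertices carrying a pebble at time $i$ are contained in $V_i(\pi)$ together with the vertex just placed. Verifying the invariant reduces to a case analysis on why a pebble on $u$ survives to time $i$: either $u$ is white because some $D$-predecessor $p$ has $\pi(p) > i$; or $u$ is black because some $D$-successor $w$ with $\pi(w) > i$ is still awaiting placement; or $u$ is black because $u$ shares a $D$-successor $w$ with a sibling predecessor $p$ that has not yet been placed, so $w$'s white pebble cannot yet be lifted. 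In each case $u$ has a $G_D$-neighbour with $\pi$-value exceeding $i$, placing $u$ in $V_i(\pi)$.

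For the lower bound direction, I take an optimal pebbling $\mathcal{P}$, let $\pi$ be its placement order, and argue that every $u \in V_i(\pi)$ must carry a pebble in $\mathcal{P}$ at time $i$. Given such $u$, pick a $G_D$-neighbour $v$ with $\pi(v) > i$; by the definition of $E_D$, one of three things holds: $u \to v$ in $D$, $v \to u$ in $D$, or $u$ and $v$ share a $D$-successor $w$. If $u \to v$, then placing $v$ (black or white) at some time $\ge \pi(v) > i$ requires $u$ pebbled then. If $v \to u$, then $u$ must have been placed white with predecessor $v$ still absent, and its white pebble cannot be lifted until $v$ is placed. If $u$ and $v$ share a successor $w$, then handling $w$ (either placing it black, or removing its white pebble) requires both $u$ and $v$ to be simultaneously pebbled at a time $\ge \pi(v) > i$. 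Summing over $u$ and comparing with the pebble count at time $i$ in $\mathcal{P}$, and separately accounting for the just-placed vertex $\pi^{-1}(i)$, yields the matching inequality up to the additive constant.

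The main technical obstacle is the ``shared-successor'' case in both directions: it is precisely where Lengauer's particular choice of auxiliary edges in $E_D$ (between predecessors of a common vertex) becomes indispensable, and it requires a delicate accounting of how a black pebble on $u$ must be kept alive past the placement of a white successor $w$ in order to enable $w$'s later removal. Once these case analyses are properly set up, the remainder of the proof is routine bookkeeping, and the additive constant is pinned down by tracking the overshoot that occurs at the moment of each new placement.
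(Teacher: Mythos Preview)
The paper does not prove this lemma; it is stated with a citation to \cite{Lengauer81} and used as a black box, so there is no proof in the paper to compare against.

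Your approach---extracting the placement order from a one-shot pebbling and, conversely, colouring each vertex black or white according to whether all of its $D$-predecessors precede it in the given ordering---is the natural one, and is essentially Lengauer's. Two places in your sketch deserve tightening. First, in the lower-bound case analysis you write that ``placing $v$ (black or white) at some time $\ge \pi(v) > i$ requires $u$ pebbled then''; but placing a \emph{white} pebble has no precondition. What forces $u$ to be present is either the black placement of $v$ or the eventual removal of $v$'s white pebble, and in the latter case you must argue that this removal occurs no earlier than $\pi(v)$ in the strategy timeline (which it does, since $v$ must first be placed). Second, in the upper-bound construction, ``removing each pebble as early as the rules permit'' needs a word of care: you should check that for every white vertex $u$ there is in fact a moment at which all its $D$-predecessors are simultaneously pebbled, so that the strategy terminates with the empty configuration. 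With these clarifications the argument goes through and the additive $-1$ falls out of the bookkeeping at the placement step.
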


\begin{lemma}[\cite{Lengauer81}]
  For an undirected graph $G = (V,E)$, let $D_G = (V \cup E, E_G)$ be
  a DAG with $E_G = \{ (v, e) \,|\, e \in E, v \in V, v \in e\}$.
  Then $$\layout(G, V, \max) = \osbwp(D_G) + 2.$$
\end{lemma}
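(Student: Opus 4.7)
The plan is to establish the equation by proving two inequalities separately, with the constant $2$ arising in each direction from the need to simultaneously hold a source vertex and an edge-vertex pebble on the board during edge processing. One direction converts an ordering of $V$ into a pebbling of $D_G$, and the other extracts an ordering from a pebbling.

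For the first inequality, I would take an optimal ordering $\pi$ of $V(G)$ achieving $\max_i |V_i(\pi)| = k$ and sweep the $V$-vertices in $\pi$-order, using only black pebbles. At step $i$, I would: (1) place a black pebble on $v_i$; (2) for every edge $e=(v_j,v_i) \in E$ with $\pi(v_j) < i$, place and immediately remove a black pebble on the corresponding edge-vertex of $D_G$ (its two $D_G$-predecessors $v_j,v_i$ are both on the board, so this move is legal); (3) remove the pebble from every $u$ with $\pi(u) \le i$ and $u \notin V_i(\pi)$. A straightforward induction shows that after step $i$ the $V$-pebbles on the board are exactly $V_i(\pi)$, so the peak pebble count within step $i$ is $|V_{i-1}(\pi)| + 2 \le k+2$, attained when a transient edge pebble sits atop $V_{i-1}(\pi) \cup \{v_i\}$.

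For the reverse inequality, given a one-shot black-white pebbling strategy of cost $k$, I would define $\pi(v)$ to be the rank of the time-step at which $v$'s (black or white) pebble is placed, for each non-isolated $v \in V$, extending the order to isolated $V$-vertices arbitrarily. The goal is to show $\max_i |V_i(\pi)| \le k-2$. The naive approach of inspecting the configuration just after $v_i$ is placed only yields $|V_i(\pi)| \le k-1$, losing a crucial unit, so the \emph{main obstacle} is to find a cleverer witness time at which both an edge-vertex and an ``extra'' source vertex are simultaneously on the board alongside all of $V_i(\pi)$. My plan is: for each $v \in V_i(\pi)$, pick some future neighbor $u_v$ (with $\pi(u_v) > i$) and let $t_v^*$ be the step at which the edge-vertex $e_{v,u_v}$ is pebbled; then set $t^* = \min_{v \in V_i(\pi)} t_v^*$, attained at some pair $(v^{**}, u^{**})$.

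The crux is then to verify that at time $t^*$, \emph{every} $v \in V_i(\pi)$ still has a pebble on the board. This holds because in a one-shot strategy $v$'s pebble, once removed, cannot be replaced, so it must persist until every edge-vertex of $D_G$ incident to $v$ has been processed; in particular it must persist at least until $t_v^* \ge t^*$. Given this, the set $V_i(\pi) \cup \{u^{**}, e_{v^{**},u^{**}}\}$ is a collection of simultaneously pebbled nodes at time $t^*$; since $u^{**} \notin V_i(\pi)$ (because $\pi(u^{**}) > i$) and $e_{v^{**},u^{**}} \notin V$, this set has cardinality $|V_i(\pi)| + 2$, yielding $k \ge |V_i(\pi)| + 2$ and completing the bound.
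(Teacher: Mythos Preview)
The paper does not prove this lemma; it is quoted from \cite{Lengauer81} without proof, so there is no in-paper argument to compare against. Two remarks on your attempt.

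First, what your two directions actually establish is $\osbwp(D_G)=\layout(G,V,\max)+2$, i.e., the displayed identity with the sign reversed. A quick sanity check confirms that this is the correct relation under the paper's definitions: for $G$ a single edge one has $\layout(G,V,\max)=1$ while $\osbwp(D_G)=3$ (any one-shot strategy on the three-vertex DAG must contain a configuration in which the sink and both of its predecessors are simultaneously pebbled). So your inequalities are the right ones; the printed ``$+2$'' appears to be a typo for ``$-2$''.

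Second, your reverse direction has a real gap when edge-vertices receive \emph{white} pebbles. You set $t_v^*$ to be the step at which $e_{v,u_v}$ is \emph{placed}, but a white pebble on $e_{v,u_v}$ may be placed long before either endpoint is on the board. Then $t^*=\min_v t_v^*$ can precede the placement time of some $w\in V_i(\pi)$ (so $w$ is not on the board at $t^*$), and you also cannot conclude that $u^{**}$ is pebbled at $t^*$. The fix is to let $t_v^*$ be the \emph{critical} step for $e_{v,u_v}$ --- its placement step if the pebble is black, its removal step if white --- which is precisely the moment at which both endpoints and the edge-vertex are simultaneously on the board. With this definition, $u^{**}$ is pebbled at $t^*$, so $\text{place}_w\le\text{place}_{v_i}<\text{place}_{u^{**}}\le t^*$ for every $w\in V_i(\pi)$; and the one-shot property still gives $t^*\le t_w^*\le\text{remove}_w$. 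Hence all of $V_i(\pi)\cup\{u^{**},e_{v^{**},u^{**}}\}$ is on the board at $t^*$, and your count $|V_i(\pi)|+2$ goes through. Your first direction (a black-only strategy) is correct as written.
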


\subsection{Small Set Expansion Conjecture}\label{sec:defs:sse}

In this section we define the SSE Conjecture.  Let $G=(V,E)$ be an
undirected $d$-regular graph.  For a set $S \subseteq V$ of vertices,
we write $\Phi_G(S)$ for the (normalized) edge expansion of $S$,
$$
\Phi_G(S) = \frac{|E(S, V \setminus S)|}{d |S|}
$$
The Small Set Expansion Problem with parameters $\eta$ and $\delta$,
denoted $\SSE(\eta,\delta)$, asks if $G$ has a small set $S$ which
does not expand or whether all small sets are highly expanding.

\begin{defn}[$\SSE(\eta, \delta)$] \label{defn:sseprob}
Given a regular graph $G=(V,E)$, $\SSE(\eta, \delta)$ is the problem
of distinguishing between the following two cases:
\begin{description}
\item[Yes] There is an $S\subseteq V$ with $|S| = \delta |V|$ and $\Phi_G(S) \leq \eta$.
\item[No] For every $S \subseteq V$ with $|S| = \delta |V|$ it holds that $\Phi_G(S) \geq 1-\eta$.
\end{description}
\end{defn}

This problem was introduced by Raghavendra and Steurer \cite{RS10},
who conjectured that the problem is hard.

\begin{conj}[Small Set Expansion Conjecture]
  For every $\eta>0$, there is a $\delta>0$ such that $\SSE(\eta,\delta)$ is NP-hard.
\end{conj}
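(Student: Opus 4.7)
The final statement is a conjecture rather than a theorem, so strictly speaking there is nothing to prove: the Small Set Expansion Conjecture is introduced by Raghavendra and Steurer \cite{RS10} as an assumption underlying hardness results, and the paper itself flags in the introduction that subexponential-time algorithms of Arora, Barak, and Steurer \cite{ABS10} (with follow-ups \cite{BRS11,GS11}) provide genuine negative evidence against it. What I can sketch is the shape that an attempted proof would plausibly take, and where it would get stuck.

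The most natural route is via reduction from another problem already believed to be NP-hard. One candidate is reduction from the Unique Games Conjecture restricted to instances whose constraint graph is mildly expanding: \cite{RS10,RST10} establish that $\SSE(\eta,\delta)$ is in fact \emph{equivalent} to UGC on such graphs, so a proof of NP-hardness of UGC in that restricted regime would immediately yield the conjecture. A second candidate is a direct PCP-style construction, building a tester (of the dictatorship-testing flavor used for Max-Cut and related CSPs) whose soundness analysis extracts, from any non-trivial strategy, a small set $S \subseteq V$ with $|S| = \delta|V|$ and $\Phi_G(S) \le \eta$ in a graph derived from the PCP. In either route the task is to amplify a weak hardness gap for some base problem into the $(\eta, 1-\eta)$ completeness/soundness gap demanded by Definition~\ref{defn:sseprob}, for an arbitrarily small fixed $\eta$ and a corresponding $\delta = \delta(\eta) > 0$.

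The main obstacle is exactly what makes the conjecture open: any reduction producing genuinely intractable $\SSE(\eta,\delta)$ instances must evade the $2^{n^{\varepsilon}}$-time algorithms of \cite{ABS10,BRS11,GS11}. A reduction directly from SAT would therefore need to blow up the instance size by an $n^{1/\varepsilon}$ factor, which is beyond the reach of current PCP technology; this essentially rules out a ``direct'' approach and forces any plausible attempt to go through UGC (or a strengthened variant). Consequently, I would expect the true technical barrier to coincide with the barrier to proving the Unique Games Conjecture itself, and would present the statement as a conjecture rather than attempt a derivation.
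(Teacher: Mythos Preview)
Your assessment is correct: the statement is a conjecture, not a theorem, and the paper offers no proof of it---it simply states the conjecture, attributes it to Raghavendra and Steurer \cite{RS10}, and uses it as a hypothesis. Your discussion of plausible proof strategies and the obstacles posed by the subexponential algorithms of \cite{ABS10} is reasonable commentary but goes beyond anything the paper itself attempts.
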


As has become common for a conjecture like this (such as the Unique
Games Conjecture), we say that a problem is \emph{SSE-hard} if it is
as hard to solve as the SSE problem.  Formally, a decision problem
$\mathcal{P}$ (e.g., a gap version of some optimization problem) is
\emph{SSE-hard} if there is some $\eta > 0$ such that for every
$\delta > 0$, $\SSE(\eta,\delta)$ polynomially reduces to $\mathcal{P}$.

Subsequently, Raghavendra et al.~\cite{RST10} showed that the SSE
Problem can in turn be reduced to a quantitatively stronger form of
itself.  To state this stronger version, we need to first define
Gaussian noise stability.

\begin{defn}
  Let $\rho \in [-1,1]$.  We define $\Gamma_\rho: [0,1] \rightarrow
  [0,1]$ by
  $$
  \Gamma_\rho(\mu) = \Pr\left[X \le \Phi^{-1}(\mu) \wedge Y \le \Phi^{-1}(\mu)\right]
  $$ where $X$ and $Y$ are jointly normal random variables with mean
  $0$ and covariance matrix $\left(\begin{array}{cc} 1 & \rho \\ \rho
  & 1
  \end{array}\right)$.
\end{defn}

The only fact we shall need about $\Gamma_\rho$ is the asymptotic
behaviour for $\rho$ close to $1$ and $\mu$ bounded away from $0$.

\begin{fact}
  \label{fact:gammabound}
  There is a constant $c > 0$ such that for all sufficiently small
  $\epsilon$ and all $\mu \in [1/10,1/2]$,
  $$
  \Gamma_{1-\epsilon}(\mu) \le \mu (1-c\sqrt{\epsilon}).
  $$
\end{fact}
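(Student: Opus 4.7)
The plan is to reduce the claim to a one-dimensional integral over $\rho$ near $1$. Let $\phi_\rho(x,y)$ denote the density of the bivariate standard normal with correlation $\rho$, and set $t = \Phi^{-1}(\mu)$. A direct calculation shows $\frac{\partial \phi_\rho}{\partial \rho} = \frac{\partial^2 \phi_\rho}{\partial x\,\partial y}$, so differentiating $\Gamma_\rho(\mu) = \int_{-\infty}^t\!\int_{-\infty}^t \phi_\rho(x,y)\,dy\,dx$ under the integral sign and applying the fundamental theorem of calculus twice yields the classical identity
\[
\frac{d}{d\rho}\Gamma_\rho(\mu) \;=\; \phi_\rho(t,t).
\]
Combining this with $\Gamma_1(\mu) = \Pr[X \le t] = \mu$ gives $\mu - \Gamma_{1-\epsilon}(\mu) = \int_{1-\epsilon}^{1}\phi_\rho(t,t)\,d\rho$, so the task reduces to lower-bounding this integral.

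Next I would substitute the explicit formula
\[
\phi_\rho(t,t) \;=\; \frac{1}{2\pi\sqrt{1-\rho^2}}\exp\!\left(-\frac{t^2}{1+\rho}\right).
\]
For $\mu \in [1/10, 1/2]$ the threshold $t = \Phi^{-1}(\mu)$ lies in a bounded interval, so there is an absolute constant $c_1 > 0$ such that the exponential factor is at least $c_1$ for every $\rho \in [1-\epsilon,1]$, once $\epsilon$ is sufficiently small. Using the elementary bound $\sqrt{1-\rho^2} \le \sqrt{2(1-\rho)}$ and changing variables to $u = 1-\rho$, I would then estimate
\[
\int_{1-\epsilon}^{1}\phi_\rho(t,t)\,d\rho \;\ge\; \frac{c_1}{2\pi\sqrt{2}}\int_0^{\epsilon}\frac{du}{\sqrt{u}} \;=\; \frac{c_1\sqrt{2\epsilon}}{2\pi}.
\]

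This already delivers an additive gap $\Gamma_{1-\epsilon}(\mu) \le \mu - c'\sqrt{\epsilon}$ for some absolute $c' > 0$. The final, slightly delicate step is converting this into the multiplicative form in the statement: since $\mu \le 1/2$, we have $c'\sqrt{\epsilon} \ge 2c'\mu\sqrt{\epsilon}$, so taking $c = 2c'$ produces $\Gamma_{1-\epsilon}(\mu) \le \mu(1 - c\sqrt{\epsilon})$. The only part requiring real care is keeping track of what the hypothesis $\mu \in [1/10, 1/2]$ is buying us: the lower endpoint keeps $\phi_\rho(t,t)$ bounded away from zero (otherwise it could decay like $e^{-t^2/2}$ as $t \to -\infty$), and the upper endpoint provides exactly the slack needed to trade an absolute constant for a factor of $\mu$.
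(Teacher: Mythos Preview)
Your argument is correct. The paper itself states this result as a ``Fact'' without proof, treating it as a standard property of the bivariate Gaussian noise stability function, so there is no proof in the paper to compare against. Your approach---using the classical identity $\frac{d}{d\rho}\Gamma_\rho(\mu)=\phi_\rho(t,t)$ (which follows from $\partial_\rho\phi_\rho=\partial_x\partial_y\phi_\rho$, sometimes attributed to Plackett) and then integrating the resulting $1/\sqrt{1-\rho}$ singularity over $[1-\epsilon,1]$---is the natural and standard way to establish the $\sqrt{\epsilon}$ scaling, and all of your estimates are sound. The only minor technical point worth remarking on is that the derivative formula is valid for $\rho<1$ while the integrand blows up at $\rho=1$, so the passage $\mu-\Gamma_{1-\epsilon}(\mu)=\int_{1-\epsilon}^{1}\phi_\rho(t,t)\,d\rho$ should be understood via continuity of $\Gamma_\rho(\mu)$ at $\rho=1$ together with the (convergent) improper integral; this is routine and does not affect the argument. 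Your closing explanation of how the two endpoints of $[1/10,1/2]$ are used is accurate and a nice touch.
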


We can now state the strong form of the SSE conjecture.

\begin{conj}[SSE Conjecture, Equivalent Formulation]
  \label{conj:main sse}
For every integer $q > 0$ and $\epsilon, \gamma >0$, it is NP-hard to distinguish between the following two cases for a given regular graph $G=(V,E)$
\begin{description}
\item[Yes] There is a partition of $V$ into $q$ equi-sized sets $S_1,
  \ldots, S_q$ such that $\Phi_G(S_i) \le 2\epsilon$ for
  every $1 \le i \le q$.
\item[No] For every $S\subseteq V$, letting $\mu = |S|/|V|$, it holds that $\Phi_G(S) \ge 1 -
  (\Gamma_{1-\epsilon/2}(\mu) + \gamma)/\mu$.
\end{description}
\end{conj}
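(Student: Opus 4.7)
The plan is to argue that Conjecture~\ref{conj:main sse} is equivalent to the basic SSE Conjecture based on Definition~\ref{defn:sseprob}; since the statement labels this as an ``Equivalent Formulation'' there is nothing to prove beyond showing the two directions of implication. The reduction is due to Raghavendra, Schramm, and Steurer~\cite{RST10}, and I would present it only at a high level since the rest of the paper uses Conjecture~\ref{conj:main sse} purely as a hypothesis.

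The easy direction, that hardness of the form in Conjecture~\ref{conj:main sse} implies hardness of $\SSE(\eta,\delta)$, is essentially by inspection. Given $(q,\epsilon,\gamma)$-hardness of the strong form, a YES instance exhibits $q$ equi-sized sets of expansion at most $2\epsilon$, so any one of them is a set of relative measure $\delta = 1/q$ with expansion at most $\eta = 2\epsilon$. In a NO instance, specializing the bound $\Phi_G(S) \ge 1 - (\Gamma_{1-\epsilon/2}(\mu)+\gamma)/\mu$ to $\mu = 1/q$ in the range of Fact~\ref{fact:gammabound} gives expansion at least $1 - \eta'$ for any desired $\eta'$ by choosing $\epsilon$ and $\gamma$ sufficiently small, which is exactly the NO side of the basic problem.

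The hard direction, which is the substantive content of \cite{RST10}, is what I would sketch in more detail. Starting from a regular graph $G$ that is a hard instance of $\SSE(\eta,\delta)$, one constructs a new graph $H$ whose vertex set is $V(G)^R$ for a large integer $R$, with edges placed by a correlated sampling process in which each coordinate is independently resampled from the stationary distribution of $G$ with probability roughly $\epsilon/2$. In the YES case, a small non-expanding set $S \subseteq V(G)$ with $|S| = \delta |V|$ is lifted to $q = 1/\delta$ equi-sized non-expanding subsets of $V(H)$ by tiling $V(G)^R$ with translates of $S^R$ through an appropriate coset structure, producing the YES side of Conjecture~\ref{conj:main sse}. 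In the NO case, one shows via hypercontractivity and an invariance-principle argument that every set $T \subseteq V(H)$ of measure $\mu$ satisfies $\Phi_H(T) \ge 1 - (\Gamma_{1-\epsilon/2}(\mu)+\gamma)/\mu$; if not, the indicator $\mathbf{1}_T$ would have large low-degree Fourier mass, from which one extracts an influential coordinate whose restriction gives a small non-expanding set in $G$, contradicting the NO hypothesis.

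The main technical obstacle is the sharp Gaussian stability bound on the NO side: matching the quantity $\Gamma_{1-\epsilon/2}(\mu)$ exactly, rather than some weaker function of $\mu$, requires the invariance principle of Mossel, O'Donnell, and Oleszkiewicz applied on the correlated product graph rather than on the Boolean cube, together with hypercontractive estimates that let one pass from low-influence Boolean functions to their Gaussian analogs where the noise stability $\Gamma_{1-\epsilon/2}$ appears exactly. Rather than redoing this analysis, I would cite \cite{RST10} for both the construction of $H$ and the tight NO-case bound, and proceed to use Conjecture~\ref{conj:main sse} as stated.
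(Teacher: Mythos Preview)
The paper does not prove this statement at all: it is stated as a \emph{conjecture}, and the equivalence with the original SSE Conjecture is simply attributed to Raghavendra, Steurer, and Tulsiani~\cite{RST10} in the surrounding text. The paper then uses Conjecture~\ref{conj:main sse} purely as a hypothesis, exactly as you anticipated in your last sentence. So your proposal is correct in spirit and in fact goes further than the paper itself, which is content to cite~\cite{RST10} as a black box for both directions of the equivalence. Your high-level sketch of the hard direction (tensoring to $V(G)^R$, invariance principle plus hypercontractivity to get the Gaussian bound $\Gamma_{1-\epsilon/2}(\mu)$) is an accurate summary of what~\cite{RST10} does, but none of it appears in the present paper, and you would be perfectly justified in writing only your final sentence: cite~\cite{RST10} and move on.
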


For future reference, let us make two remarks about the strong form of
the conjecture.

\begin{remark}
  \label{rem:sse yes}
  In the \textbf{Yes} case of Conjecture~\ref{conj:main sse}, the
  number of edges leaving $S_i$ is at most
  $$|E(S_i, V \setminus S_i)| = \Phi_G(S_i) d |S| \le 4 \epsilon |E| /
  q.$$ In particular, the total number of edges that are not contained
  in one of the $S_i$'s is at most
  $$
  \frac{1}{2} \sum_{i} |E(S_i, V \setminus S_i)| \le 2 \epsilon |E|.
  $$
\end{remark}

\begin{remark}
  \label{rem:sse no}
  Using Fact~\ref{fact:gammabound} we see that, in the \textbf{No} case
  of Conjecture~\ref{conj:main sse}, we have
  $$
  \Phi_G(S) \ge c' \sqrt{\epsilon},
  $$ provided $\mu \in [1/10,1/2]$ and setting $\gamma \le
  \sqrt{\epsilon}$.  In particular, for every $|V|/10 \le |S| \le
  9|V|/10$, we have $|E(S, V \setminus S)| \ge c \sqrt{\epsilon} |E|$
  (switching roles of $S$ and $V \setminus S$ for $|S| > |V|/2$), for
  some universal constant $c$ (not the same constant as in
  Fact~\ref{fact:gammabound}).
\end{remark}

\section{Overview of Reductions}
\label{sec:overview}

We shall proceed as follows: for the two undirected edge problems
(i.e., MLA and MCLA), the hardness follows immediately from the strong
form of the SSE Conjecture (Conjecture~\ref{conj:main sse}) -- for the
case of MLA this was proved in \cite{RST10} and the proof for MCLA is
similar.  We then give in Section~\ref{sec:directed} a simple
reduction from MLA/MCLA to the four directed problems, and in
Section~\ref{sec:undir vertex} a similar reduction from MLA/MCLA to
the two undirected vertex problem.  The results for treewidth, which
are presented in the next section, follows from an additional analysis
of the instances produced by the reduction of Section~\ref{sec:undir
  vertex}.  Unfortunately, the results do not follow from hardness for
MLA/MCLA in a black-box way; for the soundness analyses we need to use
the expansion properties of the SSE instance.

We then give a reduction from MLA/MCLA with expansion, to the four
directed problems.  This reduction simply creates the bipartite graph
where the vertex set is the union of the edges and vertices of the
original graph $G$, with directed arcs from an edge $e$ to the
vertices incident upon $e$ in $G$.  The use of direction here is
crucial: it essentially ensures that both the vertex and edge counts
of any feasible ordering corresponds very closely to the number of
edges crossing the point in the induced ordering of $G$.  

To obtain hardness for the remaining two undirected problems, we
perform a similar reduction as for the directed case, creating the
bipartite graph of edge-vertex incidences.  However, since we are now
creating an undirected graph, we can no longer force the edges to be
chosen before the vertices upon which they are incident, which was a
key property in the reduction for the directed case.  In order to
overcome this, we duplicate each original vertex a large number of
times.  This gives huge penalties to orderings which do not
``essentially'' obey the desired direction of the edges, and makes the
reduction work out.

The results for treewidth, which are presented in
Section~\ref{sec:treewidth}, follows from an additional analysis of
the instances produced by the reduction for undirected vertex
problems.  Finally, the reduction for directed problems, implying hardness for
one-shot black pebbling, does not produce the kind of ``nice''
instances promised by Theorem~\ref{thm:osbp}.  In
Section~\ref{sec:pebbling postprocess}, we give some additional
transformation to achieve these properties.

\begin{figure}
  \centering \includegraphics[scale=0.55]{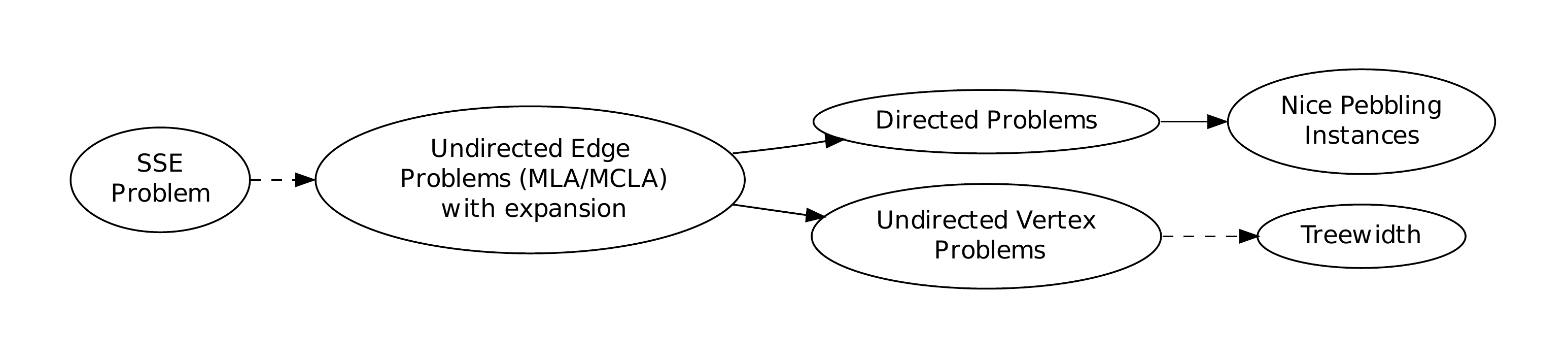}
  \caption{Overview of Reductions.  Dashed arrows indicate that the reduction is 
obtained by the identity mapping, whereas solid arrows indicate a nontrivial transformation
 from one problem to the other.}
  \label{fig:reductions-overview}
\end{figure}

Figure~\ref{fig:reductions-overview} gives a high-level overview of
these reductions.

\section{Hardness For Layout Problems}
\label{sec:layout}

In this section, we show that all of the layout problems defined in
Section~\ref{sec:layout-defs} are SSE-hard to approximate within
any constant factor.  This also shows that pathwidth and the one-shot
pebbling problems are hard to approximate within any constant.

\subsection{Hardness for MCLA and MLA} \label{sec:mcla-proof}

In this section, we recall the proof of \cite{RST10} for MLA, and
observe that it applies for MCLA as well.  For an undirected graph
$G$, let us write $\MCLA(G)$ (resp., $\MLA(G)$) for the MCLA value
(resp., MLA value) of $G$, i.e.,
\begin{align*}
\MLA(G) &= \layout(G; E, \Sigma) = \min_{\pi} \sum_{i \in [n]} |E_i(\pi)| \\
\MCLA(G) &= \layout(G; E, \max) = \min_{\pi} \max_{i \in [n]} |E_i(\pi)|.
\end{align*}

\begin{theorem}
  \label{thm:mla-mcla}
  For every $\epsilon > 0$, given a graph $G = (V,E)$, it is SSE-hard
  to distinguish between:
  \begin{description}
  \item[Yes] $\MLA(G) \le O(\epsilon \cdot |V| \cdot |E|) $ and $\MCLA(G) \le O(\epsilon |E|)$
  \item[No] For every $S \subseteq V$ with $|V|/10 \le |S| \le
    9|V|/10$, it holds that $|E(S, V \setminus S)| \ge
    \Omega(\sqrt{\epsilon} |E|)$.  In particular, $\MLA(G) \ge
    \Omega(\sqrt{\epsilon} \cdot |V| \cdot |E|)$ and $\MCLA(G) \ge
    \Omega(\sqrt{\epsilon} |E|)$.
  \end{description}
\end{theorem}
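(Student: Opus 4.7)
The plan is to use the identity reduction: given an SSE instance $G$ (a $d$-regular graph), I output $G$ itself as the MLA/MCLA instance. The analysis then reduces to reading off the Yes/No structure guaranteed by the strong form of the SSE Conjecture (Conjecture~\ref{conj:main sse}), together with Remarks~\ref{rem:sse yes} and \ref{rem:sse no}.

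For the Yes case, I will take $q = \lceil 1/\epsilon \rceil$ in the partition $V = S_1 \sqcup \cdots \sqcup S_q$ promised by the conjecture, and construct $\pi$ by listing the vertices of $S_1$ first (in arbitrary order), then those of $S_2$, and so on. I then split the edges into \emph{internal} ones (both endpoints in some common $S_i$) and \emph{cross} edges. By Remark~\ref{rem:sse yes}, there are at most $2\epsilon|E|$ cross edges. Each one contributes at most $|V|$ to the MLA sum, giving a total contribution of $O(\epsilon|V||E|)$ to MLA, and at most $2\epsilon|E|$ to the cut at any given point. Each internal edge spans distance at most $|S_i| = |V|/q \le \epsilon|V|$, giving a total MLA contribution of at most $\epsilon|V|\cdot|E|$. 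For MCLA, at any point $i$ lying inside a part $S_j$, the internal edges of $S_j$ crossing $i$ are bounded by the total number of edges inside $S_j$; by $d$-regularity this is at most $d|S_j|/2 = |E|/q \le \epsilon|E|$. Adding the two contributions yields $\MLA(G) = O(\epsilon|V||E|)$ and $\MCLA(G) = O(\epsilon|E|)$.

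For the No case, I will invoke Remark~\ref{rem:sse no}, which guarantees that $|E(S, V \setminus S)| \ge c\sqrt{\epsilon}|E|$ for every $S$ with $|V|/10 \le |S| \le 9|V|/10$. For any ordering $\pi$ and any $i \in [|V|/10, 9|V|/10]$, the set $\{v : \pi(v) \le i\}$ has size in this range, so $|E_i(\pi)| \ge c\sqrt{\epsilon}|E|$. Taking the maximum over such $i$ yields $\MCLA(G) \ge \Omega(\sqrt{\epsilon}|E|)$, and summing over the $\Omega(|V|)$ valid positions yields $\MLA(G) \ge \Omega(\sqrt{\epsilon}|V||E|)$.

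There is no real obstacle here, since the reduction is trivial and the two cases read almost directly off the conjecture; the only mild care required is in the MCLA bookkeeping in the Yes case, where I must use $d$-regularity to bound the internal edge count of each $S_j$ by $O(|E|/q)$ rather than relying on a naive vertex count. The MLA portion is exactly the argument of \cite{RST10}; the extension to MCLA is by inspecting the per-point cut size instead of summing.
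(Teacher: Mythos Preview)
Your proposal is correct and follows essentially the same approach as the paper: the identity reduction from the strong form of SSE with $q = 1/\epsilon$, the same block ordering $S_1,\ldots,S_q$ in the \textbf{Yes} case with the same internal/cross edge decomposition (and the same $d|S_j|/2 = |E|/q \le \epsilon|E|$ bound on internal edges), and the same direct appeal to Remark~\ref{rem:sse no} in the \textbf{No} case. You spell out the MLA bounds in slightly more detail than the paper (which just says ``bounded similarly''), but the argument is identical.
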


\begin{proof}

We use the instances for Conjecture \ref{conj:main sse} with $q =
1/\epsilon$.  Let $G = (V,E)$ be an instance for Conjecture~\ref{conj:main sse}.

In the \textbf{Yes} case, we have disjoint sets $S_1,\ldots,S_q$ and
for each set $S_j$, $|S_j| = n/q= \epsilon n$, $\Phi_G(S_j) \leq
2\epsilon$.  We give an ordering $\pi$ of the vertices such that
$\max_{i \in [n]} |E_i(\pi)| \leq 3\epsilon |E|$ as follows.  Order
the vertices as $S_1, \ldots, S_q$ (with the order within each $S_j$
chosen arbitrary) and let this order be $\pi$.  For any $i \in [n]$,
we show that $|E_i(\pi)| \leq 3\epsilon |E|$.  Suppose $\pi^{-1}(i)$ is a
vertex in $S_j$.  Each edge in $E_i(\pi)$ either has both end-points
inside $S_j$, or its end-points in two different $S_k$'s.  The total
number of edges inside $S_j$ is at most $\epsilon d n / 2 = \epsilon
|E|$.  Moreover, by Remark~\ref{rem:sse yes}, the total number of
edges with end-points in two different $S_k$'s is at most $2 \epsilon
|E|$.  Therefore, $\MCLA(G) \le \max_{i} |E_i(\pi)| \le 3 \epsilon
|E_i(\pi)|$.  The $\MLA$ value can be bounded similarly.

The property of the \textbf{No} instance is the same as in
Conjecture~\ref{conj:main sse} (via Remark~\ref{rem:sse no}), and the
implications for the $\MLA$ and $\MCLA$ values are immediate.
\end{proof}

\subsection{Reduction To Directed Graphs}
\label{sec:directed}

Given an undirected graph $G = (V,E)$, we construct a directed graph
$D = (V', E')$ as follows.  In order to distinguish the elements of
$V$ and $E$ from the elements of $V'$ and $E'$, we refer to elements
of $V$ as vertices, elements of $E$ as edges, elements of $V'$ as
nodes, and elements of $E'$ as arcs.

There is a node in $D$ for each vertex and for each edge of $G$,
i.e., $V' = V \cup E$.  The graph $D$ is bipartite with bipartition
$V, E$, and there is an arc in $D$ from $e \in E$ to $v \in V$
if $e$ is incident upon $v$.  Formally,
\begin{eqnarray*}
 V' &=& V \cup E \\
 E' &=& \{ (e, v) \,|\,e \in E, v \in V, v \in e\}.
\end{eqnarray*}
See also Figure~\ref{fig:reduction1}.

\begin{figure}[!t]
\begin{center}
\input{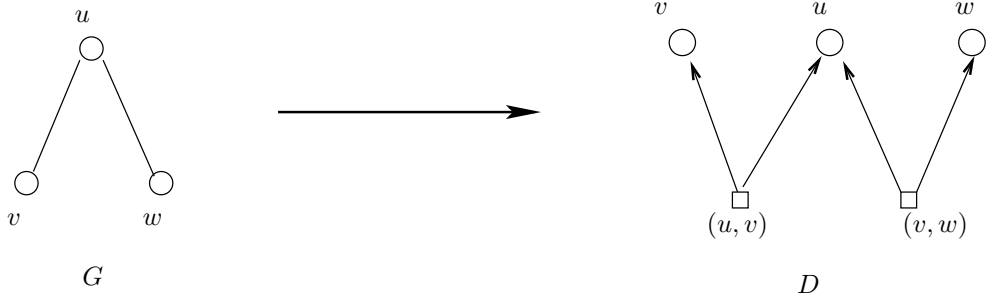}
\caption{The reduction from $G$ to $D$.}
\label{fig:reduction1}
\end{center}
\end{figure}

The remainder of this section is devoted to analyzing the reduction.
First, it is easy to give an upper bound on the four $\layout$ values
of $D$ in terms of the $\MLA$ and $\MCLA$ values of $G$.

\begin{lemma}
  \label{lem:DAG-completeness}
  The DAG $D$ constructed from $G$ as above satisfies the following:
  \begin{align*}
    \layout(D; E, \Sigma) & \le \left( \MLA(G) + O(|E|) \right) \cdot (d+1) \\
    \layout(D; E, \max) & \le \MCLA(G) + d.
  \end{align*}
\end{lemma}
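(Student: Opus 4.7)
The plan is to exhibit, for any ordering $\pi$ of $V(G)$, a feasible (topological) ordering $\pi'$ of $V(D)$ whose cost is close to that of $\pi$, and then to instantiate $\pi$ as an optimal MCLA (resp.\ MLA) ordering to get each of the two bounds. The natural construction is: enumerate the vertices in the order given by $\pi$ as $v_1, v_2, \ldots, v_n$, and for each $i$ in turn, insert the edge-nodes for all edges $e = (v_i, v_k)$ with $k > i$ (in arbitrary internal order) into $\pi'$, immediately followed by the vertex-node $v_i$. Since every edge is placed before both of its endpoints, this is a valid topological ordering of $D$. Writing $d^+(v_j) = |\{k > j : (v_j,v_k) \in E\}|$, the positions satisfy $\pi'(v_i) = i + \sum_{j \le i} d^+(v_j)$, and the edge-nodes in $v_i$'s ``block'' occupy positions $[\pi'(v_{i-1})+1,\,\pi'(v_i)-1]$.

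For the \emph{max} bound, I would fix an optimal MCLA ordering $\pi$ and analyze an arbitrary point $p$ lying in $v_i$'s block. Classify the arcs crossing $p$ into two types: (i) arcs $(e,v_j)$ for some $j \ge i$ where $e = (v_a, v_j)$ has $a < i$ (so $e$'s block, and in particular $\pi'(e)$, sits before $v_{i-1}$); and (ii) arcs $(e,v_i)$ and $(e,v_j)$ for $e=(v_i,v_j)$ in $v_i$'s block whose position is $\le p$. The Type-(i) arcs are in bijection with the edges of $G$ crossing position $i-1$ in $\pi$, contributing $|E_{i-1}(\pi)|$; the Type-(ii) arcs contribute $2t$ where $t \le d^+(v_i)$. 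Maximizing over $p$ (i.e., taking $t = d^+(v_i)$) yields $|E'_p(\pi')| \le |E_{i-1}(\pi)| + 2 d^+(v_i)$. Now the elementary identity $|E_i(\pi)| = |E_{i-1}(\pi)| + d^+(v_i) - d^-(v_i)$ rewrites this as $|E_i(\pi)| + \deg(v_i) \le \MCLA(G) + d$.

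For the \emph{sum} bound I would fix an optimal MLA ordering $\pi$ and use the identity $\sum_p |E'_p(\pi')| = \sum_{(e,v) \in E'}(\pi'(v) - \pi'(e))$, splitting each edge $e = (v_i, v_k)$ (with $i < k$) into its two outgoing arcs. The ``short'' arc $(e, v_i)$ contributes at most $d^+(v_i) + 1 \le d+1$, so summed over all edges this part is at most $(d+1)|E|$. For the ``long'' arc $(e, v_k)$ we have
\[
\pi'(v_k) - \pi'(e) \;\le\; (k - i) + \sum_{i \le j \le k} d^+(v_j),
\]
and summing over $e$ the first term yields $\MLA(G)$, while the second, using $d^+(v_j) \le d$, is at most $d \sum_e (k-i+1) = d(\MLA(G) + |E|)$. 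Adding everything gives $\layout(D;E,\Sigma) \le (d+1)\MLA(G) + (2d+1)|E| = (d+1)(\MLA(G) + O(|E|))$.

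The bulk of the work is bookkeeping, but the one step that is not mechanical is obtaining the clean constant ``$+d$'' in the max bound rather than something like $+2d$: the key trick will be the identity $|E_{i-1}(\pi)| + 2d^+(v_i) = |E_i(\pi)| + \deg(v_i)$, which lets one charge the Type-(ii) contribution against the edges that are no longer counted in $E_i(\pi)$ but were counted in $E_{i-1}(\pi)$. Everything else (checking topological validity, reading off $\pi'(v_i)$, and the summation estimates) is routine.
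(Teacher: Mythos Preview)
Your proof is correct, and your construction of $\pi'$ is exactly the one the paper uses. The difference is only in the bookkeeping.

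For the $\max$ bound your analysis is essentially identical to the paper's, just more explicit: the paper states the single pointwise inequality $|E'_{\pi'(z)}(\pi')| \le |E_{\pi(u_\pi(z))}(\pi)| + d$ (where $u_\pi(z)$ is the vertex whose block $z$ sits in) and reads off the $\max$ bound directly. Your ``key trick'' identity $|E_{i-1}(\pi)| + 2d^+(v_i) = |E_i(\pi)| + \deg(v_i)$ is precisely what verifies that pointwise inequality, so there is no real difference here.

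For the $\Sigma$ bound you take a genuinely different route: you switch to the arc-length formulation $\sum_p |E'_p(\pi')| = \sum_{(e,v)}(\pi'(v)-\pi'(e))$ and bound short and long arcs separately. This works, but it is more labor than necessary. The paper simply sums the same pointwise inequality over all $z\in V'$, observing that each $v\in V$ is $u_\pi(z)$ for at most $d+1$ nodes $z$ (namely $v$ itself and its $d^+(v)\le d$ forward edges), which immediately gives
\[
\sum_{z\in V'} |E'_{\pi'(z)}(\pi')| \;\le\; (d+1)\sum_{v\in V} |E_{\pi(v)}(\pi)| \;+\; d|V'| \;=\; (d+1)\,\MLA(G) + O(d|E|).
\]
So the paper's argument is shorter because the single pointwise bound does double duty; your arc-length detour recovers the same estimate but with extra summation work.
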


Note that, for the purposes of applying this to the graphs of
Theorem~\ref{thm:mla-mcla} the error term of $|E|$ (resp.\ $d$) is
insignifcant compared to the $\MLA$ (resp.\ $\MCLA$) value of $G$.

\begin{proof}
  Consider an ordering $\pi$ of $V$.  For a set
  of vertices $S$ of $V$, let $u_{\pi}(S) \in S$ denote the vertex of $S$
  that comes first in the ordering $\pi$.  

  We extend $\pi$ to an ordering $\pi'$ of $V'$ by inserting each edge
  $e = (u,v)$ immediately before the vertex $u_\pi(e)$.  It is easy to
  see that for each node $z \in V' \cup E'$,
  \begin{align*}
    |E_{z}(\pi')| & \le |E_{u_{\pi}(z)}(\pi)| + d 
  \end{align*}
  This immediately implies
  \begin{equation*}
    \layout(D; E, \max) \le \max_{z \in V'} |E_{z}(\pi')| \le \max_{u \in V} |E_{u}(\pi)| + d,
  \end{equation*}
  Setting $\pi$ to be an
  optimal MCLA ordering of $G$, we obtain the second claim of the
  Lemma.  Similarly, using that $|u^{-1}(v)| \le d+1$ for every $v \in V$, we get
  \begin{equation*}
    \layout(D; E, \Sigma) \le \sum_{z \in V'} |E_{z}(\pi')| \le (d+1) \sum_{u \in V} |E_{u}(\pi)| + d |V'|,
  \end{equation*}
  Setting $\pi$ to be
  an optimal MLA ordering of $G$ and using $|V'| = O(|E|)$, we obtain
  the first claim of the Lemma.
\end{proof}

Next we use the strong soundness property of Theorem~\ref{thm:mla-mcla}
to argue about the soundness of $D$.

\begin{lemma}
  \label{lem:DAG-soundness}
  Suppose $G$ has the property that for every $|V|/10 \le |S| \le
  9|V|/10$ we have $|E(S, V \setminus S)| \ge \Omega(\sqrt{\epsilon}
  |E|)$.  Then,
  \begin{align*} 
    \layout(D; V, \Sigma) & \ge \Omega(\sqrt{\epsilon} |E|^2) \\
    \layout(D; V, \max) & \ge \Omega(\sqrt{\epsilon} |E|) 
  \end{align*}
\end{lemma}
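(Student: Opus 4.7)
The plan is to exploit the directed/topological structure of $D$ to reduce the lower bound to a cut bound on $G$. Let $\pi'$ be an arbitrary topological ordering of $D$ and let $\pi$ denote its restriction to the vertex nodes $V$, which is just some (arbitrary) ordering of $V$. The edge-nodes must, in any topological ordering, appear before both of their endpoint vertex-nodes. Also, since every vertex node is a sink in $D$, the set $V_i(\pi')$ contains only edge-nodes.

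The key step, which I expect to be the main (and essentially the only) real obstacle, is the pointwise inequality
\begin{equation*}
|V_{\pi'(u)}(\pi')| \;\geq\; |E_{\pi(u)}(\pi)| \quad \text{for every } u \in V.
\end{equation*}
To prove this, I would take any edge $e = \{u',v'\}$ of $G$ crossing position $\pi(u)$, say $\pi(u') \le \pi(u) < \pi(v')$, which (since $\pi$ is the restriction of $\pi'$) gives $\pi'(u') \le \pi'(u) < \pi'(v')$. The topological-order constraint forces $\pi'(e) < \pi'(u') \le \pi'(u)$, while $e$ has the endpoint $v'$ with $\pi'(v') > \pi'(u)$; by definition this means $e \in V_{\pi'(u)}(\pi')$. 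So every edge of $G$ crossing position $\pi(u)$ is one of the edge-nodes counted in $V_{\pi'(u)}(\pi')$.

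With that inequality in hand the two bounds are immediate. For the max case, pick $u$ with $\pi(u) = \lfloor |V|/2 \rfloor$; then $E_{\pi(u)}(\pi)$ is exactly the cut in $G$ induced by the balanced bipartition $(L,V\setminus L)$ where $L = \{w : \pi(w) \le \pi(u)\}$, and the expansion hypothesis of the lemma immediately yields $|V_{\pi'(u)}(\pi')| \ge \Omega(\sqrt{\epsilon} |E|)$. For the sum case, I would restrict the sum over all $n' = |V|+|E|$ positions to the $|V|$ positions corresponding to vertex-nodes and apply the pointwise inequality:
\begin{equation*}
\sum_{i \in [n']} |V_i(\pi')| \;\ge\; \sum_{u \in V} |V_{\pi'(u)}(\pi')| \;\ge\; \sum_{u \in V} |E_{\pi(u)}(\pi)| \;=\; \sum_{k=1}^{|V|} |E_k(\pi)|.
\end{equation*}
The last sum is $\MLA_\pi(G) \ge \MLA(G)$; by the expansion hypothesis each term with $k \in [|V|/10, 9|V|/10]$ is $\Omega(\sqrt{\epsilon}|E|)$, so the sum is at least $\Omega(\sqrt{\epsilon} |V| |E|)$. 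Finally, since the SSE instance $G$ is regular of (constant) degree $d$, $|E| = \Theta(|V|)$ and thus $|V||E| = \Theta(|E|^2)$, yielding the claimed $\Omega(\sqrt{\epsilon}|E|^2)$ lower bound. The whole argument is really driven by the one observation about topological ordering; everything else is just bookkeeping and one invocation of the expansion assumption.
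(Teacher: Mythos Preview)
Your argument for $\layout(D; V, \max)$ is correct and is essentially the paper's proof. For $\layout(D; V, \Sigma)$, your pointwise inequality $|V_{\pi'(u)}(\pi')| \ge |E_{\pi(u)}(\pi)|$ is also correct and cleanly yields $\sum_j |V_j(\pi')| \ge \Omega(\sqrt{\epsilon}\,|V|\,|E|)$. The gap is in the very last step, where you upgrade this to $\Omega(\sqrt{\epsilon}\,|E|^2)$ by asserting that the SSE instance has \emph{constant} degree $d$, so that $|V| = \Theta(|E|)$. Nothing in the lemma's hypothesis, nor in the paper's formulation of the SSE problem (Definition~\ref{defn:sseprob} and Conjecture~\ref{conj:main sse} only say ``regular''), bounds $d$ by an absolute constant. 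Without that assumption your bound is a factor $d/2$ short of what the lemma claims, and this factor is exactly what is needed to match the completeness side, where Lemma~\ref{lem:DAG-completeness} incurs a $(d+1)$ blowup.

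The paper avoids this issue by summing over positions indexed by edge-nodes rather than vertex-nodes. It considers all positions $i$ at which the fraction $c_i$ of edge-nodes lying at or before $i$ satisfies $1/5 \le c_i \le 4/5$; there are $\Omega(|E|)$ such positions regardless of $d$. For each such $i$ it shows $|V_i(\pi')| \ge \Omega(\sqrt{\epsilon}\,|E|)$ via a short case split on $|S_i|$, where $S_i = \{v \in V : \pi'(v) > i\}$: a counting argument gives $|S_i| \ge (1-c_i)|V| \ge |V|/5$, so either $|S_i| \le 9|V|/10$ and the expansion hypothesis applies directly (every cut edge is an edge-node before $i$ with an arc crossing $i$, exactly as in your argument), or $|S_i| > 9|V|/10$ and then $S_i$ is incident to at least $9|E|/10$ edges, forcing at least $(c_i - 1/10)|E| \ge |E|/10$ of the edge-nodes already placed to have an arc crossing $i$. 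Summing gives $\Omega(\sqrt{\epsilon}\,|E|^2)$ with no assumption on $d$.
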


\begin{proof}
  Let $\pi'$ be any ordering of $V'$.  Using the expansion property of
  Theorem~\ref{thm:mla-mcla}, we'll show that this ordering must have
  high cost.  For a point $i \in [N]$, let $S_i$ be the set of
  vertices of $V$ that appear \emph{after} $i$ in $\pi'$.

  The bound on $\layout(D; V, \max)$ is immediate: consider a point $i
  \in [N]$ such that $|S_i| = |V|/2$.  By the expansion property
  $|E(S_i, V \setminus S_i)| \ge \Omega(\sqrt{\epsilon}|E|)$, and
  since each such edge $e$ has one of its endpoints before point $i$,
  the node $e$ itself must appear before point $i$ and thus
  $|V_i(\pi')| \ge |E(S_i, V \setminus S_i)| \ge
  \Omega(\sqrt{\epsilon} |E|)$.

  Let us then turn to $\layout(D; V, \Sigma)$.  Write $c_i$ for the
  fraction of edges $e$ that appear \emph{before} (or at) point $i$ in
  $\pi'$.  We shall show that whenever $1/5 \le c_i \le 4/5$, we
  have $|V_i(\pi')| \ge \Omega(\sqrt{\epsilon}|E|)$, giving a total of
  $\layout(D; V, \Sigma) \ge \Omega(\sqrt{\epsilon}|E|^2)$.

  By a simple counting argument, we have
  $$
  d |S_i| \ge 2 (1-c_i) |E|,
  $$ implying $|S_i| \ge (1-c_i)|V|$ which for $c_i \le 4/5$ is at
  least a $1/5$ fraction of vertices.  If in addition $|S_i| \le
  9|V|/10$, the argument above gives $|V_i(\pi')| \ge
  \Omega(\sqrt{\epsilon} |E|)$.  The remaining case is that $|S_i| \ge
  9 |V|/10$.  But then $S_i$ is incident upon at least a $9/10$
  fraction of edges.  This implies that the number of edges incident
  upon $S_i$, appearing before $i$ in $\pi'$, are at least
  $|E|(c_i-1/10)$ which for $c_i \ge 1/5$ is $\Omega(\sqrt{\epsilon}
  |E|)$.
\end{proof}

Combining Lemma~\ref{lem:DAG-completeness} and
Lemma~\ref{lem:DAG-soundness}, with Theorem~\ref{thm:mla-mcla}, and
using the fact that edge costs are always larger than the
corresponding vertex costs, we immediately obtain the following
theorem.

\begin{theorem}
  \label{thm:directed}
  Given a DAG $D$, $\layout(D; E, \max)$, $\layout(D; E, \Sigma)$,
  $\layout(D; V, \max)$, and $\layout(D; V, \Sigma)$ are all SSE-hard
  to approximate within any constant factor, even in DAG's with
  maximum path length $1$ (i.e., every vertex is a source or a sink).
\end{theorem}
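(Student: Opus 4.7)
The plan is to chain together the three prior results in the obvious way. Start with an SSE-hard instance $G = (V,E)$ of the MLA/MCLA gap problem from Theorem~\ref{thm:mla-mcla} with parameter $\epsilon$, and apply the edge-vertex bipartite construction from the beginning of Section~\ref{sec:directed} to obtain a DAG $D = (V', E')$. Since every arc of $D$ goes from an edge-node to a vertex-node, $D$ is bipartite with every edge-node a source and every vertex-node a sink, so the maximum directed path length is $1$, matching the claim of the theorem.

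Next I would assemble the gap for each of the four layout measures. Lemma~\ref{lem:DAG-completeness} directly bounds $\layout(D; E, \max) \le \MCLA(G) + d$ and $\layout(D; E, \Sigma) \le (\MLA(G) + O(|E|))(d+1)$ in the \textbf{Yes} case, and Lemma~\ref{lem:DAG-soundness} directly bounds $\layout(D; V, \max) \ge \Omega(\sqrt{\epsilon}|E|)$ and $\layout(D; V, \Sigma) \ge \Omega(\sqrt{\epsilon}|E|^2)$ in the \textbf{No} case. To fill in the other two corners, I use the pointwise inequality $|E_i(\pi')| \ge |V_i(\pi')|$ valid for any ordering $\pi'$: this upgrades the vertex soundness bounds to edge soundness ($\layout(D; E, \agg) \ge \layout(D; V, \agg)$) and downgrades the edge completeness bounds to vertex completeness ($\layout(D; V, \agg) \le \layout(D; E, \agg)$). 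Hence all four parameters have completeness $O(\epsilon|E| + d)$ (or the sum analogue) and soundness $\Omega(\sqrt{\epsilon}|E|)$ (or the sum analogue).

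The remaining step is to verify that the soundness-to-completeness ratio is an arbitrarily large constant as $\epsilon \to 0$. For the max variants the ratio is $\Omega(\sqrt{\epsilon}|E|)/O(\epsilon|E| + d) = \Omega(1/\sqrt{\epsilon})$, using that the SSE instance is $d$-regular with $|E| = d|V|/2$, so the additive $d$ is negligible once $|V|$ is large compared to $1/\epsilon$. For the sum variants, substituting $|E| = d|V|/2$ gives $\Omega(\sqrt{\epsilon}|E|^2)/O(\epsilon|V||E|d) = \Omega(1/\sqrt{\epsilon})$ as well. Thus given any target constant factor $\alpha$, choosing $\epsilon$ small enough (and then invoking the SSE instance of the corresponding size) produces a gap of at least $\alpha$, establishing SSE-hardness of constant-factor approximation for all four measures simultaneously.

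I do not anticipate a hard step: the nontrivial work is already absorbed into Lemmas~\ref{lem:DAG-completeness} and \ref{lem:DAG-soundness} and into Theorem~\ref{thm:mla-mcla}. The only point to watch is the low-order $d$ and $|V'|$ terms in the completeness bound, but these are easily shown to be dominated because the SSE instance can be taken arbitrarily large for each fixed $\epsilon$.
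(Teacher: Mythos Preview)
Your proposal is correct and follows essentially the same approach as the paper: combine Theorem~\ref{thm:mla-mcla} with Lemmas~\ref{lem:DAG-completeness} and \ref{lem:DAG-soundness}, and use the pointwise inequality $|E_i(\pi')| \ge |V_i(\pi')|$ to transfer completeness from edge to vertex measures and soundness from vertex to edge measures. Your explicit verification that the additive $d$ and $|E|$ terms are negligible and that the resulting gap is $\Omega(1/\sqrt{\epsilon})$ is more detailed than what the paper writes, but the argument is the same.
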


\begin{remark}
  In fact we see that, as in Theorem~\ref{thm:mla-mcla}, the four
  hardness results applies to the same instance, so that it is SSE-hard
  to distinguish all of the four $\layout$ values being high from all of
  them being low.  
\end{remark}

As the one-shot black pebbling problem is precisely $\layout(D; V,
\max)$, we obtain hardness for one-shot black pebbling as an immediate
corollary.  However, the instances are not single-sink DAGs with
maximum indegree $2$, as promised in Theorem~\ref{thm:osbp}.  In
Section~\ref{sec:pebbling postprocess} we show how to transform the
instances further to obtain such DAGs.

\subsection{Undirected Vertex Problems}
\label{sec:undir vertex}

The reduction for undirected vertex problems is very similar to the
reduction for directed problems given in the previous section.  As
before, we introduce nodes for every edge of $G$.  As in the directed
case, we are interested in orderings where an edge appears before its
two endpoints, but we cannot use direction to force this anymore.
Instead, we ensure that orderings that are not like this incur a
high cost by replicating each node corresponding to a vertex of $G$
many times.

Given an undirected graph $G = (V,E)$, we construct a new graph
$G' = (V', E')$ as follows.

There are $r$ nodes in $G'$ for each vertex and one node for each edge
of $G$, i.e., $V' = V \times [r] \cup E$.  For a vertex $u \in V$ we
write $u^{1}, \ldots, u^{r}$ to denote the $r$ copies of $u$ and refer
to each such set of $r$ nodes as a \emph{vertex group}.  The graph
$G'$ is bipartite with bipartition $V \times [r], E$, and there is an
edge in $G'$ between $e \in E$ and $v^{i} \in V \times [r]$ if $e$ is
incident upon $v$.  Formally,
\begin{eqnarray*}
  V' &=& \{ v^i \,|\, v \in V, i \in [r]\}\\
  E' &=& \{ (e, v^i) \,|\,e \in E, v \in V, v \in e, i \in [r]\}.
\end{eqnarray*}
See also Figure~\ref{fig:reduction2}.

\begin{figure}[t]
\begin{center}
\input{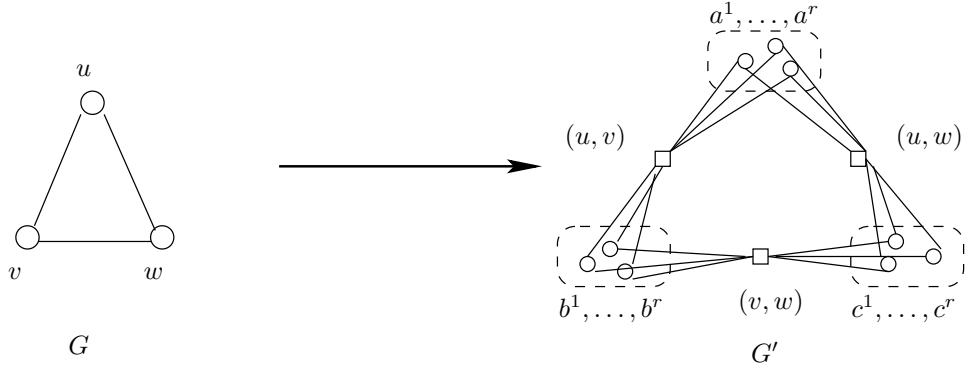}
\caption{The reduction from $G$ to $G'$, illustrated for $r = 3$.}
\label{fig:reduction2}
\end{center}
\end{figure}

\begin{lemma}
  \label{lem:undir-completeness}
  The graph $G'$ constructed from $G$ as above satisfies the following:
  \begin{align*}
    \layout(G'; V, \Sigma) & \le (d+r) \MLA(G) \\
    \layout(G'; V, \max) & \le \MCLA(G).
  \end{align*}
\end{lemma}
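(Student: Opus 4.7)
The plan is to mirror Lemma~\ref{lem:DAG-completeness}. I would start from an optimal ordering $\pi$ of $V(G)$ (witnessing $\MLA(G)$ or $\MCLA(G)$ respectively) and lift it to an ordering $\pi'$ of $V' = V \times [r] \cup E$ by processing the vertices of $G$ in $\pi$-order and, for each $u$, appending to $\pi'$ first the edge-nodes $\{e \in E : u_\pi(e) = u\}$ (edges whose $\pi$-earlier endpoint is $u$) and then the $r$ copies $u^1, \ldots, u^r$. This is the natural undirected analog of placing each edge-node just before its earlier endpoint's block, which worked in the DAG construction.

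The central observation I would exploit is that under this $\pi'$, vertex-copies contribute nothing to $V_i(\pi')$. Indeed, by the time any $v^j$ appears, every edge of $G'$ incident on it---which is exactly the set of edges of $G$ incident on $v$---has already been emitted: such an edge lives either in $v$'s own block (if $v = u_\pi(e)$) or in the earlier block of $v$'s $\pi$-earlier neighbor. Hence $v^j$ has no neighbor to the right of $i$, so $v^j \notin V_i(\pi')$, and $|V_i(\pi')|$ reduces to a pure edge-node count. This is the critical structural fact that replaces the use of direction in the DAG reduction and is what justifies paying the overhead of duplicating each vertex $r$ times.

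It then remains to identify the contributing edge-nodes at a position $i$ inside $u$'s block. Up to an $O(d)$ boundary contribution from the (at most $d$) edges incident on $u$ itself---some of which may have been emitted already but still have a copy of $u$ to their right---this count is exactly the number of edges of $G$ crossing position $\pi(u)$, i.e.\ $|E_{\pi(u)}(\pi)|$. Maximizing over $i$ then yields the max bound $\layout(G';V,\max)$; summing over the $r + d_u^+$ positions inside each $u$-block and then over $u \in V$ yields the sum bound, each position contributing $\le |E_{\pi(u)}(\pi)|$ plus an $O(d)$ boundary term. Both bounds match the stated right-hand sides to within lower-order terms that are absorbed by the constant-factor slack in the downstream hardness argument (e.g.\ by the fact that $\MLA(G) = \Omega(|E|)$ in the regimes we care about).

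The main obstacle I anticipate is the careful bookkeeping at intermediate positions within $u$'s block: when $i$ has passed some but not all of $u$'s edges and some but not all of $u$'s copies, one must track which of the $O(d)$ boundary edges incident on $u$ are currently to the left of $i$ and still have an endpoint-copy to the right. This is routine case analysis once the ``no vertex-copy contribution'' observation is in hand, but it is the step where one has to be most careful about the exact quantities in order for the final bounds to come out as claimed.
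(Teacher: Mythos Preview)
Your approach is essentially identical to the paper's: both build $\pi'$ by placing each edge-node immediately before the block of copies of its $\pi$-earlier endpoint, observe that vertex-copies never contribute to $V_i(\pi')$, and bound the remaining edge-node count by the number of $G$-edges crossing the corresponding position in $\pi$. Your caution about the $O(d)$ boundary term is in fact warranted: the paper asserts the clean pointwise inequality $|V_z(\pi')| \le |E_{u_\pi(z)}(\pi)|$, which is slightly too strong (an edge $(a,u)$ with $\pi(a)<\pi(u)$ contributes at intermediate positions of $u$'s block without crossing position $\pi(u)$), so your version with the additive $O(d)$---absorbed downstream since $\MCLA(G)\gg d$ in the Yes case---is the more accurate statement.
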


\begin{proof}
  We proceed as in the proof of Lemma~\ref{lem:DAG-completeness}.  An
  ordering $\pi$ of $V$ naturally induces an ordering $\pi'$ of $V'$:
  put all $r$ copies of $u \in V$ consecutively, with vertices of $V$
  appearing in the same order as in $\pi$, and insert each edge $e \in
  E$ immediately before its first vertex.  Again, for an edge $e \in
  E$, let $u_{\pi}(z)$ denote the endpoint of $e$ that appears first
  in $\pi$.  Similarly, for a copy $v^i \in V'$ of $v \in V$, let
  $u_{\pi}(v^i) = v$.  It is easy to see that the constructed ordering
  $\pi'$ satisfies
  \begin{align*}
    |V_{z}(\pi')| & \le |E_{u_{\pi}(z)}(\pi)|
  \end{align*}
  for every $z \in V'$.  This immediately implies
  \begin{equation*}
    \layout(G'; V, \max) \le \max_{z \in V'} |V_{z}(\pi')| \le \max_{u \in V} |E_{u}(\pi)|,
  \end{equation*}
  Similarly, we get
  \begin{equation*}
    \layout(G'; V, \Sigma) \le \sum_{z \in V'} |V_{z}(\pi')| \le (d+r) \sum_{u \in V} |E_{u}(\pi)|.
  \end{equation*}
\end{proof}

\begin{lemma}
  \label{lem:undir-soundness}
  Suppose $G$ has the property that for every $|V|/10 \le |S| \le
  9|V|/10$ we have $|E(S, V \setminus S)| \ge \Omega(\sqrt{\epsilon}
  |E|)$.  Then, if $r \ge |V| \cdot |E|$, we have
  \begin{align*} 
    \layout(G'; V, \Sigma) & \ge \Omega(\sqrt{\epsilon} \cdot r \cdot |V| \cdot |E|) \\
    \layout(G'; V, \max) & \ge \Omega(\sqrt{\epsilon} |E|) 
  \end{align*}
\end{lemma}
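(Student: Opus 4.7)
Given any ordering $\pi'$ of $V'$, I will define $S_i \subseteq V$ to be the set of vertices whose majority (at least $r/2$) of their $r$ copies lies at positions strictly greater than $i$ in $\pi'$. Then $|S_i|$ decreases monotonically from $|V|$ to $0$ by unit steps, dropping by one exactly when $i$ crosses the $(r/2)$-th copy position of some vertex; let $m_1 \le m_2 \le \ldots \le m_{|V|}$ be these median positions. The key per-position bound I will establish is that $|V_i(\pi')| \ge \Omega(\sqrt{\epsilon}|E|)$ whenever $|S_i| \in [|V|/10,\,9|V|/10]$. This follows from the expansion hypothesis: there are $\ge c\sqrt{\epsilon}|E|$ edges in $E(S_i,V \setminus S_i)$, and for each such $e = (u,v)$ with $u \in S_i$ and $v \notin S_i$, either $\pi'(e) \le i$ (so $e$ has a right-neighbor among the $\ge r/2$ copies of $u$ to the right, hence $e \in V_i(\pi')$) or $\pi'(e) > i$ (so each of the $\ge r/2$ copies of $v$ to the left of $i$ has $e$ as a right-neighbor and so lies in $V_i(\pi')$). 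Either way $|V_i(\pi')| \ge \min(c\sqrt{\epsilon}|E|,\,r/2) = \Omega(\sqrt{\epsilon}|E|)$, since $r \ge |V| \cdot |E|$ dominates $\sqrt{\epsilon}|E|$.

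The $\max$ bound is now immediate, by choosing $i^*$ with $|S_{i^*}| = \lfloor |V|/2 \rfloor$ (such an $i^*$ exists because $|S_i|$ takes every integer value in $[0,|V|]$). For the $\Sigma$ bound, integrating the per-position bound over the balanced range gives $\sum_i |V_i(\pi')| \ge (m_{9|V|/10} - m_{|V|/10}) \cdot \Omega(\sqrt{\epsilon}|E|)$. If this median-gap is at least $r|V|/10$, we obtain the target $\Omega(\sqrt{\epsilon}\,r|V||E|)$ directly and are done. The delicate case is when the $4|V|/5+1$ middle medians are tightly clustered within a window of length $o(r|V|)$, in which case the naive integration falls short.

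To handle the clustered case I plan to use the alternative decomposition
$$\sum_i |V_i(\pi')| \;=\; \sum_{v \in V}\sum_{j=1}^{r} (M_v - \pi'(v^j))_+ \;+\; \sum_{e \in E} (\rho(e) - \pi'(e))_+,$$
where $M_v$ is the rightmost edge-node incident to $v$ and $\rho(e)$ is the rightmost copy-neighbor of $e$. Writing $\ell_v = |\{j : \pi'(v^j) < M_v\}|$ and $\sigma = \sum_v \ell_v$, the first sum is at least $\sum_v \ell_v(\ell_v+1)/2 \ge \sigma^2/(2|V|)$ (by packing the $\ell_v$ copies into $[1,M_v)$ and applying Cauchy--Schwarz), while for each $e = (u,v)$ the $r-\ell_v$ copies of $v$ past $M_v \ge \pi'(e)$ force $\rho(e) - \pi'(e) \ge r - \min(\ell_u,\ell_v)$. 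When $\sigma = \Omega(r|V|)$, the copies-sum alone yields $\Omega(r^2|V|) = \Omega(\sqrt{\epsilon}\,r|V||E|)$ since $r \ge |V||E|$. When $\sigma$ is small, the naive edge bound only gives $\Omega(r|E|)$, which falls short of the target by a factor $\sqrt{\epsilon}|V|$; to close this gap I intend to sharpen the edge contribution using the fact that in the clustered regime, at least $r|V|/20$ copies of the leftmost $|V|/10$ vertices are forced into the left tail outside the cluster (and symmetrically on the right), which pushes the rightmost-copy positions $p^v_r$ far to the right and produces an extra factor of $|V|$ in $\sum_e (\rho(e)-\pi'(e))$, with the $\sqrt{\epsilon}$ factor re-entering via the expansion of $G$ applied to cuts induced by the median ordering. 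Fitting these pieces together quantitatively -- combining the copies bound, the sharpened edge bound, and the structural constraints of the clustered regime, all while re-invoking the expansion of $G$ -- is the main technical obstacle I expect to encounter.
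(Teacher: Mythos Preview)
Your per-position bound and the $\max$ conclusion are correct; the dichotomy on whether $\pi'(e)\le i$ or $\pi'(e)>i$ is a clean variant of the paper's per-position argument.  The genuine gap is in the $\Sigma$ bound.  You correctly identify that the medians $m_j$ can cluster in a window of length $O(|V|)$ --- for instance, interleave the first $r/2$ copies of every vertex, then place all edge-nodes, then interleave the remaining $r/2$ copies --- so integrating your per-position bound over $[m_{|V|/10},m_{9|V|/10}]$ can be smaller than the target by a factor of $r$.  Your fallback decomposition $\sum_{v,j}(M_v-\pi'(v^j))_+ + \sum_e(\rho(e)-\pi'(e))_+$ is a valid identity, but the estimate $\rho(e)-\pi'(e)\ge r-\min(\ell_u,\ell_v)$ is much too weak: in the interleaved example just described, $\rho(e)-\pi'(e)$ is of order $r|V|$, not $r$.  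Your plan to recover the missing factor $\sqrt{\epsilon}\,|V|$ by combining ``clustered-regime'' structure, rightmost-copy positions, and a fresh invocation of expansion is only a sketch, and as written it is not clear how the pieces fit together quantitatively (in particular, ``clustered medians'' and ``small $\sigma$'' are two different regimes, and you have not related them).

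The paper sidesteps all of this with a single preliminary case split.  If \emph{any} vertex $u$ has at least $r/2$ copies preceding some incident edge-node $e$, then the first $r/4$ of those copies each have an arc to $e$ crossing each of the next $r/4$ positions, so the sum is already at least $(r/4)^2$, which (since $r\ge |V|\,|E|$) dwarfs $\sqrt{\epsilon}\,r|V||E|$.  Otherwise every vertex has at least $r/2$ copies appearing after \emph{all} of its incident edge-nodes; discard the remaining $\le r/2$ ``bad'' copies of each vertex.  Now define the balanced range not by medians but by raw vertex-copy counts: with $\ge r|V|/2$ copies surviving, the interval between the point $i_1$ with $r|V|/10$ copies to its left and the point $i_2$ with $r|V|/10$ copies to its right has length at least $3r|V|/10$ automatically.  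For each $i\in[i_1,i_2]$, one now uses the good-ordering structure (every surviving copy is to the right of all its edges) to show directly that each crossing edge of $G$ contributes an edge-node to $V_i(\pi')$, giving $|V_i(\pi')|\ge\Omega(\sqrt{\epsilon}|E|)$; integrating over the long range yields $\Omega(\sqrt{\epsilon}\,r|V||E|)$.  The initial case split is the missing idea: with it, no median tracking and no secondary decomposition are needed.
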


\begin{proof}
  Let $\pi'$ be an ordering of $V'$.  First we have the following
  simple claim, establishing that for good orderings, most vertices appear
  after their edges.
  \begin{claim}
    Suppose that for some vertex $u \in V$, at least $r/2$ of the
    copies of $u$ in $G'$ appear before some edge $e = (u,v) \in E$
    adjacent upon $u$.  Then 
    \begin{align*}
      \max_{i \in [N]} |V_i(\pi')| & \ge r/4 \gg \Omega(\sqrt{\epsilon} \cdot |E|) \\
      \sum_{i \in [N]} |V_i(\pi')| & \ge (r/4)^2 \gg \Omega(\sqrt{\epsilon} \cdot r \cdot |V| \cdot  |E|).
    \end{align*}
  \end{claim}
  
  \begin{proof}
    Let $I_1$ be the first half of the positions where copies of $u$
    appear before $e$, and $I_2$ the second half.  Thus, $|I_1|, |I_2|
    \ge r/4$.  Then each element of $I_1$ contributes to $V_i(\pi')$
    for each $i \in I_2$, giving the claimed bounds.
  \end{proof}

  Thus we may without loss of generality assume that for each vertex
  $u$ of $V$, at least $r/2$ of its $r$ copies in $G'$ appear
  \emph{after} all edges adjacent upon $u$.  From now on, let us
  discard all the $\le r/2$ ``bad'' copies of each vertex of $V$ that
  appear before some of its edges.  This only decreases the cost of
  $\pi'$, and there are still $\ge r|V|/2$ vertex nodes left.

  Let $i_1$ be the (first) point of $\pi'$ such that $r|V|/10$ vertex
  nodes are to the left of $i_1$, and $i_2$ the (last) point of $\pi'$
  such that $r|V|/10$ vertex nodes are to the right of $i_2$.  

  \begin{claim}
    \label{claim:undir-soundness}
    For any point $i$ between $i_1$ and $i_2$, we have $|V_i(\pi')| \ge
    \Omega(\sqrt{\epsilon} |E|)$.
  \end{claim}

  \begin{proof}
    Let $S \subseteq V$ (resp.\ $T \subseteq V$) be the set of
    vertices $u$ such that some copy of $u$ appears before $i$
    (resp.\ after $i$).  We then have $|S|, |T| \ge
    \frac{r|V|/10}{r/2} \ge |V|/5$, and $S \cup T = V$.  Thus we can
    partition $V$ into $S' \subseteq S, T' \subseteq T$ such that
    $|S'|, |T'| \le 4|V|/5$.  By the expansion property of $G$ we have
    $|E(S', T')| \ge \Omega(\sqrt{\epsilon}|E|)$.  Further, we also
    have $|V_i(\pi')| \ge |E(S', T')|$ as each $e \in E(S',T')$ must
    appear before $i$ in $\pi'$ (because one of their endpoints is in
    $S$) but have an edge crossing $i$ (because the other of their
    endpoints is in $T$).
  \end{proof}

  From Claim~\ref{claim:undir-soundness}, the proof of the lemma
  follows immediately.
\end{proof}

As in the previous section, we can now combine
Lemma~\ref{lem:undir-completeness} and
Lemma~\ref{lem:undir-soundness}, with Theorem~\ref{thm:mla-mcla}, to obtain:

\begin{theorem}
  \label{thm:undir-vertex}
  Given a graph $G$, $\layout(G; V, \max)$, $\layout(G; V, \Sigma)$
  are both SSE-hard to approximate within any constant factor, even in
  bipartite graphs.
\end{theorem}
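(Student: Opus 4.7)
The plan is to combine the two lemmas just proved with Theorem~\ref{thm:mla-mcla} in the obvious way. Start from an SSE-hard gap instance $G$ produced by Theorem~\ref{thm:mla-mcla} for a sufficiently small parameter $\epsilon>0$ (to be chosen at the end), and apply the vertex-blowup reduction $G \to G'$ described in this section with replication parameter $r = |V|\cdot|E|$. Note that $r$ is polynomial in the input size, so the reduction runs in polynomial time, and $G'$ is bipartite by construction with bipartition $V \times [r]$ and $E$.

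In the \textbf{Yes} case of Theorem~\ref{thm:mla-mcla}, we have $\MLA(G) \le O(\epsilon \cdot |V| \cdot |E|)$ and $\MCLA(G) \le O(\epsilon |E|)$. Plugging these bounds into Lemma~\ref{lem:undir-completeness} and using that $d \le r$ gives
\begin{equation*}
  \layout(G'; V, \max) \le O(\epsilon |E|),
  \qquad
  \layout(G'; V, \Sigma) \le O(\epsilon \cdot r \cdot |V| \cdot |E|).
\end{equation*}
In the \textbf{No} case, $G$ satisfies the expansion hypothesis of Lemma~\ref{lem:undir-soundness}, and since $r \ge |V|\cdot|E|$ the lemma applies and yields
\begin{equation*}
  \layout(G'; V, \max) \ge \Omega(\sqrt{\epsilon}\,|E|),
  \qquad
  \layout(G'; V, \Sigma) \ge \Omega(\sqrt{\epsilon} \cdot r \cdot |V| \cdot |E|).
\end{equation*}

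Thus in both the max and the sum versions, the multiplicative gap between Yes and No instances is $\Omega(1/\sqrt{\epsilon})$. Given any constant $\alpha \ge 1$, choose $\epsilon$ small enough that $1/\sqrt{\epsilon} > \alpha$; Theorem~\ref{thm:mla-mcla} then gives an SSE-hard instance, and the reduction above produces a bipartite graph $G'$ for which distinguishing $\layout(G'; V, \max)$ (resp. $\layout(G'; V, \Sigma)$) within a factor $\alpha$ is SSE-hard. This proves the theorem.

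There isn't really a difficult step here, since all the real work was done in the two lemmas; the only thing to check is that the two asymptotic gaps match up so that both $\layout(G'; V, \max)$ and $\layout(G'; V, \Sigma)$ can be handled simultaneously by a single reduction, which they do because the completeness blow-up factor $(d+r)$ in Lemma~\ref{lem:undir-completeness} is matched exactly by the factor $r$ appearing in the soundness bound of Lemma~\ref{lem:undir-soundness}. The only design choice worth flagging is that $r$ must be polynomially large (at least $|V|\cdot|E|$) so that the "bad copies" argument in the proof of Lemma~\ref{lem:undir-soundness} overwhelms the contribution from expansion-induced edges; but this is still polynomial, so the reduction is valid.
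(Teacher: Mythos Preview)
Your proposal is correct and follows exactly the approach the paper takes: the paper simply states that one combines Lemma~\ref{lem:undir-completeness} and Lemma~\ref{lem:undir-soundness} with Theorem~\ref{thm:mla-mcla}, and you have spelled out precisely how that combination works, including the choice $r = |V|\cdot|E|$ and the observation that $G'$ is bipartite by construction. There is nothing to add.
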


As the pathwidth problem is precisely $\layout(G; V, \max)$, we obtain
hardness for pathwidth as an immediate corollary.  In
the next section, we'll show the stronger soundness
required for Theorem~\ref{thm:treewidth}.

\section{Hardness For Treewidth}\label{sec:treewidth}

In this section we shall complete our proof of
Theorem~\ref{thm:treewidth} by showing that the hard instances for
pathwidth from Theorem~\ref{thm:undir-vertex} also have large
treewidth.

\begin{lemma}
  \label{lem:treewidth-soundness}
  Let $G = (V,E)$ be an undirected graph with the property that for
  every $|V|/10 \le |S| \le 9|V|/10$ we have $|E(S, V \setminus S)|
  \ge \Omega(\sqrt{\epsilon} |E|)$, and let $G'$ be the graph obtained
  by applying the reduction of Section~\ref{sec:undir vertex} to $G$.
  Then, if $r \ge |V| \cdot |E|$,
  we have
  \begin{align*} 
    \tw(G') & \ge \Omega(\sqrt{\epsilon} |E|)
  \end{align*}
\end{lemma}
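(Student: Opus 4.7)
\emph{Proof plan.} The approach is to use the standard fact that any graph $H$ with $\tw(H) = k$ admits, for any nonnegative weighting of its vertices, a balanced vertex separator $X \subseteq V(H)$ with $|X| \le k + 1$ such that the components of $H - X$ can be grouped into two sides $P_1, P_2$ each of weight at most a constant fraction (say $\tfrac{2}{3}$) of the total weight. I apply this to $G'$, weighting each vertex-copy $u^i$ by $1$ and each edge-node $e$ by $0$, so the total weight is $r|V|$. The goal is to project the resulting separator back to $V$ and contradict the expansion of $G$.

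More precisely, given such $X$ and $P_1, P_2$, partition $V$ into four sets: $S_1$ (resp.\ $S_2$) consists of the vertices $u \in V$ whose non-$X$ copies all lie in $P_1$ (resp.\ $P_2$); $S_3$ consists of the ``split'' vertices with copies in both $P_1$ and $P_2$; and $S_4$ consists of vertices with all $r$ copies in $X$. The key observation is that if $u \in S_3$, then every edge $e = (u,v) \in E$ has its corresponding edge-node in $X$: in $G'$, the node $e$ is adjacent to copies of $u$ on both sides of the separator and therefore cannot lie in $P_1$ or $P_2$. A handshake argument then yields $|S_3| \le 2|X|/d$, while $|S_4| \le |X|/r$ by counting copies in $X$, and the weight bound on $P_i$ gives $|S_i| \le \tfrac{2|V|}{3} + |X|/r$ for $i = 1,2$. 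The hypothesis $r \ge |V| \cdot |E|$ ensures that, provided $\tw(G') = o(\sqrt{\epsilon}\,|E|)$, all three error terms $|S_3|, |S_4|, |X|/r$ are $o(|V|)$, so $|S_1|, |S_2|$ both lie in $[|V|/10, 9|V|/10]$.

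Now apply the expansion hypothesis to $A = S_1$: it yields $|E_G(A, V \setminus A)| \ge \Omega(\sqrt{\epsilon}\,|E|)$. On the other hand, every such crossing $G$-edge either has an endpoint in $S_3 \cup S_4$ or else connects $S_1$ to $S_2$, in which case its corresponding edge-node is in $X$ by the key observation. Therefore the cut size is at most $|X| + d(|S_3| + |S_4|) \le |X|\bigl(3 + d/r\bigr) \le 4(k+1)$. Combining the two bounds gives $\tw(G') + 1 \ge \Omega(\sqrt{\epsilon}\,|E|)$, finishing the proof. The main technical hurdle is ensuring that the three ``error'' contributions -- split vertices in $S_3$, fully-absorbed vertices in $S_4$, and the separator $X$ itself -- are all negligible relative to the target cut lower bound, which is precisely why the blow-up parameter is taken as large as $r \ge |V| \cdot |E|$ in the reduction of Section~\ref{sec:undir vertex}.
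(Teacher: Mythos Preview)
Your proof is correct and follows essentially the same strategy as the paper: use a balanced-separator characterization of treewidth, project the two sides of the separator in $G'$ back to vertex sets in $G$, observe that any $G$-edge crossing between the two sides forces its edge-node into the separator, and invoke the expansion hypothesis to lower-bound the separator size.

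The differences are in the bookkeeping rather than the idea. The paper invokes the $1/2$-separator number bound $\tw(G') \ge K_{1/2}(G') - 1$ from \cite{BGHK95} with $W = V'$ (unweighted), merges components into $V_1', V_2'$, and then defines \emph{overlapping} sets $S = \{u : \text{some copy of }u\in V_1'\}$ and $T = \{u : \text{some copy of }u\in V_2'\}$; choosing any balanced partition $S' \subseteq S$, $T' \subseteq T$ immediately gives $|C| \ge |E(S',T')|$, with no need to isolate or bound ``split'' or ``absorbed'' vertices. Your version instead uses a weighted separator theorem, partitions $V$ into four disjoint pieces $S_1,S_2,S_3,S_4$, and carries the error terms $d|S_3|$, $d|S_4|$ through the final inequality. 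Both are valid; the paper's overlapping-set trick is a little slicker (it sidesteps the handshake bound on $|S_3|$ and the $d/r$ term entirely), while your explicit accounting makes the role of the hypothesis $r \ge |V|\cdot|E|$ more visible.
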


To prove Lemma~\ref{lem:treewidth-soundness}, we shall use the fact
that the treewidth of a graph is closely related to an expansion-like
property called the $1/2$-separator number, defined in \cite{BGHK95}.

\begin{defn}[1/2-vertex separator, 1/2-separator number]
  Let $G=(V,E)$ be an undirected graph. For $W \subseteq V$, a
  \emph{$1/2$-vertex separator} of $W$ in $G$ is a set $S\subseteq V$ of
  vertices such that every connected component of the graph $G[V-S]$
  contains at most $|W|/2$ vertices of $W$.  Let $\psi_G(1/2, W)$
  denote the minimum size of a 1/2-vertex separator of $W$ in
  $G$.  We define the \emph{1/2-separator number} $K_{1/2}(G)$ to be
  $$K_{1/2}(G) = \max_{W\subseteq V} \psi_G(1/2,W).$$
\end{defn}

\begin{lemma}[\cite{BGHK95}]  \label{lemma:separator}
  For every graph $G = (V,E)$, it holds that $\tw(G) \ge K_{1/2}(G)-1$.
\end{lemma}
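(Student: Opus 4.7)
My plan is to prove the contrapositive-style inequality $K_{1/2}(G) \le \tw(G) + 1$ by exhibiting, for any tree decomposition of $G$ and any $W \subseteq V$, a single bag $V_t$ that already serves as a $1/2$-vertex separator of $W$. Since every bag has size at most $\tw(G)+1$, this immediately yields $\psi_G(1/2, W) \le \tw(G)+1$ for all $W$, and hence the lemma.

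Fix a tree decomposition $(T, \{V_t\}_{t \in T})$ of $G$ of minimum width $k = \tw(G)$, root $T$ at an arbitrary node, and for each $t \in T$ define $A_t = W \cap \bigcup_{t' \in \mathrm{subtree}(t)} V_{t'}$, the set of $W$-vertices appearing in some bag below (or at) $t$. I walk down from the root and pick $t$ to be the \emph{lowest} node with $|A_t| > |W|/2$ (if no such node exists, $W$ is already small enough that the empty separator works). By the choice of $t$, every child $t'$ of $t$ satisfies $|A_{t'}| \le |W|/2$, and the parent side (bags not in $\mathrm{subtree}(t)$) contains at most $|W| - |A_t| < |W|/2$ elements of $W$ that do not lie in $A_t$.

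The key structural step is then to verify that $V_t$ separates $W$ in the required way. Using property (T2) every edge of $G$ is contained in some single bag, so after deleting $V_t$ from $G$, each connected component of $G[V \setminus V_t]$ lies entirely in the bags of a single component of $T - t$. Using property (T3), any vertex appearing in bags on both sides of $t$ must belong to $V_t$ itself, so the $W$-vertices remaining in the parent-side component are exactly those in $W \setminus A_t$, of which there are fewer than $|W|/2$, while the $W$-vertices remaining in each child component are a subset of $A_{t'} \setminus V_t$, which has size at most $|W|/2$ by the minimality of $t$. Thus every component of $G - V_t$ meets $W$ in at most $|W|/2$ vertices, so $V_t$ is a $1/2$-separator of $W$ of size $\le k+1$.

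The only real obstacle is choosing the right potential/centroid: one has to be careful that the ``lowest heavy node'' argument cleanly handles the parent side, where the bound $|W| - |A_t|$ could in principle be inflated by $W$-vertices that lie simultaneously in $V_t$ and in some parent-side bag. Property (T3) is exactly what kills this concern, since any such doubly-appearing vertex is removed by $V_t$ and hence does not count against any component of $G - V_t$. Once this subtlety is observed, the rest of the argument is routine and gives $K_{1/2}(G) \le \tw(G)+1$, i.e., $\tw(G) \ge K_{1/2}(G)-1$.
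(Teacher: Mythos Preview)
The paper does not give its own proof of this lemma: it is stated with a citation to \cite{BGHK95} and used as a black box in the proof of Lemma~\ref{lem:treewidth-soundness}. So there is no ``paper's proof'' to compare against.

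Your argument is the standard centroid-in-the-tree-decomposition proof and is correct. A couple of small points worth tightening in the write-up: (i) the phrase ``lowest node with $|A_t| > |W|/2$'' should be stated as the greedy walk---start at the root (where $A_{\text{root}} = W$) and, while some child $t'$ has $|A_{t'}| > |W|/2$, move to such a child---since ``lowest'' is ambiguous when several branches are heavy; this walk clearly terminates at a node $t$ with $|A_t| > |W|/2$ and $|A_{t'}| \le |W|/2$ for every child $t'$. (ii) Your handling of the parent side is exactly right: any $w \in W$ that appears both in some bag of $\mathrm{subtree}(t)$ and in some bag outside must, by (T3), lie in $V_t$ and hence is removed; so the $W$-vertices surviving on the parent side are precisely those in $W \setminus A_t$, of size $< |W|/2$. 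With these clarifications the proof is complete and yields $\psi_G(1/2,W) \le |V_t| \le \tw(G)+1$ for every $W$, i.e., $K_{1/2}(G) \le \tw(G)+1$.
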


Using this, it is now straightforward to prove the lower bound on the
treewidth.

\begin{proof}[Proof of Lemma~\ref{lem:treewidth-soundness}]
    We'll show that $\psi_{G'}(1/2, V') \ge \Omega(\sqrt{\epsilon}
    |E|)$ (i.e. we choose $W=V'$).  Suppose $C$ is an optimal
    $1/2$-vertex separator of $V'$ and it separates $V'\setminus C$
    into $l$ sets $V_1', \ldots, V_l'$, each of size at most $|V'|/2$.
    By merging different $V_i'$ we may assume that we only have two
    sets $V_1'$ and $V_2'$, both of size \emph{at least} $|V'|/5$.

    Now, similarly to the proof of Claim~\ref{claim:undir-soundness}, let
    $S \subseteq V$ (resp.\ $T \subseteq V$) be the set of vertices
    $v$ such that some copy of $V$ appears in $V_1'$ (resp.\ $V_2'$).
    As $|V_1'|, |V_2'| \ge r |V|/5$, this implies that both $|S|, |T|$
    are at least $|V|/5$, and furthermore $S \cup T = V$ (since
    otherwise all $r$ copies of some vertex are in $C$, implying $|C|
    \ge r \gg \Omega(\sqrt{\epsilon}|E|)$).  We can thus choose a
    balanced partition $S', T'$ such that $S' \subseteq S$, $T'
    \subseteq T$, and we have $|E(S',T')| \ge
    \Omega(\sqrt{\epsilon}|E|)$.  But every edge $e = (u,v)$ such that
    $u \in S'$ and $v \in T'$ must belong to $C$, since it is
    connected (in $G'$) to every copy of $u$ and $v$.
\end{proof}

\section{Nicer Pebbling Instances}
\label{sec:pebbling postprocess}

In this section we show how to transform our hard instances for
one-shot black pebbling so as to have in-degree bounded by $2$ and
single sinks.

We begin with the in-degree.

\begin{lemma}
  \label{lemma:indeg2}
  Given a DAG $D = (V,E)$ we can in polynomial time construct a DAG
  $D' = (V', E')$ such that every node of $D'$ has in-degree at most
  $2$ and
  $$
  \osbp(D) \le \osbp(D') \le \osbp(D)+d,
  $$
  where $d$ is the maximum in-degree of $D$.
\end{lemma}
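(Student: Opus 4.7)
The natural plan is to apply the standard chain-based indegree reduction. For each $v \in V$ with indegree $k \ge 3$ and predecessors $u_1, \ldots, u_k$, I introduce fresh chain nodes $v_1, \ldots, v_{k-1}$ in $D'$, set the predecessors of $v_1$ to be $\{u_1, u_2\}$, the predecessors of $v_j$ (for $2 \le j \le k-1$) to be $\{v_{j-1}, u_{j+1}\}$, and the sole predecessor of $v$ itself in $D'$ to be $v_{k-1}$. Nodes of indegree at most $2$ in $D$ are left untouched. Every node of $D'$ then has indegree at most $2$ by construction, and $|V'| = O(|V| + |E|)$, so the reduction is polynomial-time.

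For the upper bound $\osbp(D') \le \osbp(D) + d$, I would perform a direct simulation. Given any one-shot pebbling $\mathcal{P}$ of $D$, I play each move in $D'$ verbatim except when $\mathcal{P}$ places a pebble on a high-indegree node $v$. At such a moment $u_1, \ldots, u_k$ are all currently pebbled by the pebbling rules, so the simulation can replace this single placement with the sequence of placements $v_1, v_2, \ldots, v_{k-1}, v$, removing each $v_{j-1}$ as soon as $v_j$ is placed. During this sequence only a small constant number of chain pebbles exist simultaneously on top of the snapshot of $\mathcal{P}$, which gives overhead at most $d$.

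For the lower bound $\osbp(D) \le \osbp(D')$, my plan is to use the layout reformulation $\osbp(H) = \layout(H; V, \max)$ from Section~\ref{sec:pebbling-defs}. Given an ordering $\pi'$ of $V'$ achieving $\layout(D'; V, \max)$, I restrict it to $V$ to obtain an ordering $\pi$. This $\pi$ is automatically a valid topological ordering of $D$, because for any edge $(u, v)$ of $D$, $u$ is a predecessor in $D'$ of some chain node $v_{l(u,v)}$ of $v$, which itself precedes $v$, so $u$ precedes $v$ in $\pi'$ and hence in $\pi$. The goal is then to show $\max_i |V_i(\pi)| \le \max_j |V_j(\pi')|$. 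Fixing $i$ and letting $j$ be the $\pi'$-position of the $i$-th vertex of $V$ in $\pi$, I would split each $u \in V_i(\pi)$ into two cases: if $u \in V_j(\pi')$ it maps to itself; otherwise, since $u$ is a $D$-predecessor of some $v$ appearing after $j$ and the chain $v_1, v_2, \ldots$ is topologically sorted in $\pi'$, there is a well-defined ``transition'' chain node $v_{l^*-1}$ that lies in $V_j(\pi')$, and $u$ is charged to it.

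The main obstacle is to control this charging so that it is essentially injective; naively, a single transition chain node $v_{l^*-1}$ might be charged by up to $d$ different $u_i$'s, which would weaken the bound by a factor of $d$. The key structural fact I would exploit is that $u \notin V_j(\pi')$ forces \emph{every} $D'$-successor of $u$ to appear at or before $j$, which severely constrains the set of $v$'s for which $u$ can trigger a Case~B charge; combined with the topological ordering of each chain in $\pi'$, this should force the chain-node-to-$u$ correspondence to be $O(1)$-to-one, and any residual constant-factor loss is harmlessly absorbed by the $+d$ additive slack already present in the upper bound.
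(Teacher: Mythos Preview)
Your chain-based construction fails the lower bound $\osbp(D) \le \osbp(D')$, and the failure is not a technicality of the charging argument but a genuine defect of the reduction. Take $D$ to be a single sink $w$ with $k$ sources $u_1,\ldots,u_k$ as its predecessors; then $\osbp(D)=k$ (at position $i=k$ in any topological ordering, all $k$ sources still have $w$ ahead). Your $D'$ is the chain with arcs $u_1,u_2\to w_1$, $w_{j-1},u_{j+1}\to w_j$, and $w_{k-1}\to w$, and the ordering
\[
u_1,\,u_2,\,w_1,\,u_3,\,w_2,\,u_4,\,w_3,\,\ldots,\,u_k,\,w_{k-1},\,w
\]
has $\max_i |V_i(\pi')|=2$, so $\osbp(D')=2$. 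The chain allows each $u_m$ to be discarded as soon as $w_{m-1}$ is pebbled, so the ``all $k$ predecessors simultaneously pebbled'' requirement of $D$ simply evaporates in $D'$. In your charging scheme at $i=k$, every $u_1,\ldots,u_{k-1}$ lands in Case~B and is charged to the single transition node $w_{k-2}$, a $(k-1)$-to-one collision; the hoped-for $O(1)$-to-one bound is false. Your closing remark about absorbing constant-factor loss into the $+d$ slack conflates the two directions of the inequality: the $+d$ is in the \emph{upper} bound, whereas the lower bound must hold with no loss.

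This is exactly why the paper replaces each high-indegree vertex by a \emph{pyramid} of size $d(u)$ rather than a chain. The crucial property (Lemma~\ref{lemma:frugal} in the paper) is that in any one-shot black pebbling of a size-$d$ pyramid, at every moment until all $d$ sources have been touched the number of pebbles on the pyramid equals the number of sources touched so far. Thus the pyramid faithfully ``remembers'' how many original predecessors have been pebbled, and one can simulate any normal-form pebbling of $D'$ by a pebbling of $D$ using no more pebbles. A chain has no such memory, which is precisely what your counterexample exploits.
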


As the proof of this lemma is somewhat lengthy, we defer it until
later in this section and first describe how to obtain a DAG with a
single sink.

\begin{lemma}
  \label{lemma:onesink}
  Given a DAG $D = (V,E)$ we can in polynomial time construct a DAG
  $D' = (V', E')$ such that $D'$ has a single sink and
  $$
  \osbp(D) \le \osbp(D') \le \osbp(D) + s+1,
  $$ where $s$ is the number of sinks in $D$.  Furthermore, if $D$ has
  maximum in-degree $2$ then so does $D'$.
\end{lemma}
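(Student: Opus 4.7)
The plan is to funnel the $s$ sinks of $D$ into a single new sink through a chain of binary AND-nodes, and to verify both inequalities by an explicit pebbling strategy and by restriction.

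Concretely, first I would order the sinks of $D$ arbitrarily as $v_1, \ldots, v_s$, introduce fresh nodes $u_2, \ldots, u_s$ (setting $u_1 := v_1$ for notational convenience), and add arcs $(u_{i-1}, u_i)$ and $(v_i, u_i)$ for each $i \ge 2$. This makes $u_s$ the unique sink of $D'$ and gives every newly-added node in-degree exactly $2$. Since the vertices of $V$ gain no new incoming arcs, the in-degree bound claim is automatic. The lower bound $\osbp(D') \ge \osbp(D)$ is then immediate: any one-shot strategy for $D'$ pebbles $u_s$, hence inductively every $u_i$ and every $v_i$, so its restriction to $V$ is a valid one-shot strategy for $D$ whose cost is no greater.

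For the upper bound I would exhibit a two-phase strategy starting from an optimal one-shot pebbling $\mathcal{P}$ of $D$ of cost $c = \osbp(D)$. In phase~1 I run $\mathcal{P}$ inside $D'$ with the single modification that pebbles placed on the sinks $v_1, \ldots, v_s$ are never removed (this is legal, since removing a black pebble is always optional). Using the layout characterization $\osbp(D) = \layout(D; V, \max)$ together with the fact that sinks of $D$ are invisible to every $V_i(\pi)$ (they have no out-neighbours in $D$), the extra ``stuck'' sink pebbles do not overlap with the pebbles tracked by $\mathcal{P}$, so the cost throughout phase~1 is at most $c + s$. At the end of phase~1 the configuration is exactly $\{v_1, \ldots, v_s\}$, and in phase~2 I place $u_2, u_3, \ldots, u_s$ in order, removing $v_i$ and $u_{i-1}$ immediately after each placement of $u_i$ (legal because $u_i$ is their only out-neighbour in $D'$). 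The peak of phase~2 is realized right after $u_2$ goes down, when the configuration has $s+1$ pebbles; after that, each iteration places one pebble and removes two, so the configuration monotonically shrinks. Combining, $\osbp(D') \le \max(c + s,\, s + 1) \le c + s + 1$.

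The only real obstacle is the accounting in phase~1: one has to argue carefully that holding onto the sink pebbles really adds at most $s$ to the cost, rather than secretly duplicating pebbles that $\mathcal{P}$ is already counting. The key point, exploited above, is that in the layout view of $\osbp(D)$ the sinks of $D$ never enter any $V_i(\pi)$, so the sink pebbles and the ``working'' pebbles of $\mathcal{P}$ occupy disjoint budgets. Everything else --- the single-sink verification, the in-degree audit, and the restriction argument for the lower bound --- is routine.
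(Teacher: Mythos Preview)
Your proof is correct and follows essentially the same approach as the paper: the paper attaches a binary tree with $s$ leaves (identified with the old sinks) whose root becomes the new unique sink---your caterpillar $u_2,\ldots,u_s$ is a particular such tree---argues the lower bound by restriction to $V$, and obtains the upper bound by running an optimal one-shot pebbling of $D$ while retaining all sink pebbles and then pebbling the tree. The one unnecessary detour in your write-up is the appeal to the layout characterization in phase~1: the direct observation that at every step the modified configuration equals $\mathcal{P}$'s configuration together with at most $s$ already-removed sink pebbles immediately yields the $c+s$ bound without invoking $V_i(\pi)$.
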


\begin{proof}
  Construct $D'$ by adding a binary tree with $s$ leaves to $D$, and
  identifying the leaves of the tree with the sinks of $D$.  The
  properties of $D'$ are easily verified.  Since $D'$ is a super-DAG
  of $D$, its pebbling cost must be at least $\osbp(D)$.  Conversely,
  a valid pebbling of $D'$ can be obtained by using a one-shot
  pebbling of $D$ but without removing pebbles from the sinks, and
  then pebbling the tree.
\end{proof}

Given Lemmas~\ref{lemma:indeg2} and \ref{lemma:onesink} we can derive
Theorem~\ref{thm:osbp}.  The point is that for the graphs produced in
Theorem~\ref{thm:directed}, the maximum indegree $d$ equals the degree
of the original SSE instance, and the number of sinks $s$ equals the
number of vertices of the original pebbling instance.  On the other
hand, the one-shot black pebbling cost (i.e., the $\layout(D; V,
\max)$ value) is of order $\epsilon |E|$ which is much larger than $d$
and $s$, so the additive losses of Lemmas~\ref{lemma:indeg2} and
\ref{lemma:onesink} are insignificant.  We omit the details.

\subsection{Lemma~\ref{lemma:indeg2}}

  In this section we prove Lemma~\ref{lemma:indeg2}.  First, recall
  the definition of a pyramid graph.

  \begin{figure}[t]
    \centering
    \includegraphics{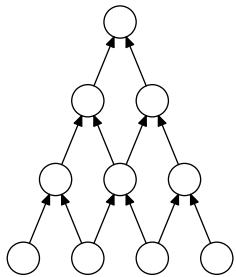}
    \caption{Pyramid of size $4$}
    \label{fig:pyramid}
  \end{figure}

  \begin{defn}
    A \emph{pyramid graph} of size $d$ is a layered graph of indegree
    two, with $d$ layers, labelled $0, 1, \ldots, d-1$.  Layer zero
    (the input layer) consists of $d$ vertices, and layer $i$ contains
    $d-i$ vertices.  See Figure~\ref{fig:pyramid}.
  \end{defn}

  \begin{figure}[t]
    \centering
    \includegraphics{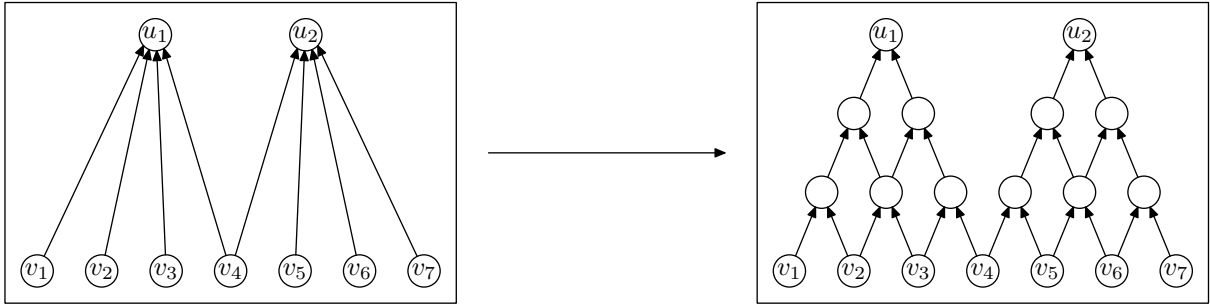}
    \caption{Reduction to DAGs of indegree $2$}
    \label{fig:indeg2reduction}
  \end{figure}

  The reduction of Lemma~\ref{lemma:indeg2} to produce DAGs of
  indegree $2$ is as follows.  Construct $D'$ by replacing each vertex
  $u$ by a pyramid $P_u$ of size $d(u)$ (here, $d(u)$ denotes the
  indegree of $u$), where the $d(u)$ vertices at layer $0$ of $P_u$
  are identified with the predecessors of $u$, and $u$ is identified
  with the vertex at layer $d(u)-1$ of $P_u$.  See
  Figure~\ref{fig:indeg2reduction}.

  To prove the lemma we need to show that $D'$ constructed this way
  satisfies $$\osbp(D) \le \osbp(D') \le \osbp(D') + d,$$ where $d$ is
  the maximum indegree of any vertex $u$ of $D$.

  In what follows whenever we say ``pebbling strategy'' of $D$ or $D'$
  we always refer to a one-shot black pebbling strategy of $D$ or
  $D'$.
  
  The upper bound on $\osbp(D')$ is trivial: if $S$ is a valid
  pebbling strategy for $D$, then clearly we can create
  a corresponding pebbling strategy for $D'$ by
  pebbling through the pyramid whenever $D$ pebbles the sink of the
  pyramid.  This takes at most $d$ additional pebbles.  

  In the other direction, we want to show that a pebbling strategy for
  $D'$ can be converted into a pebbling strategy for $D$.  We will
  first show that $D'$ can be assumed to be in a particular normal
  form, and then using this normal form, we will show how to simulate
  the pebbling.

  \begin{defn}
    Let $S'$ be a pebbling strategy of $D'$. That is, $S'$ is a sequence of
    configurations, where each configuration is a set of black pebble
    placements, and such that the sequence of configurations follows
    by the black pebbling rules. We say that configuration $c \in S'$ is
    saturated with respect to a pyramid $P_u$ if $c$ is the first time in
    $S'$ that there is a black pebble path cutting the sink of $P_u$ from
    all of the sources of $P_u$. (The cut does not include any sources
    or the sink of $P_u$.)  Note that this cut has size $d-1$.
  \end{defn}

  \begin{claim} 
    Let $S'$ be a pebbling strategy for $D'$. We can assume without
    loss of generality that $D'$ has the following normal form.  For
    each configuration $c' \in S'$, if $c'$ is saturated with respect
    to pyramid $P_u$, then the subsequent moves of $S'$ pebble the sink
    of $P_u$ (in the obvious way), removing all other black pebbles on
    the internal nodes of $P_u$.
  \end{claim}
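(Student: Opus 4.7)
The plan is to prove the claim by a direct local exchange argument: given any one-shot black pebbling strategy $S'$ of $D'$, I would normalize it one pyramid at a time, producing a strategy $S''$ of no greater cost that obeys the normal form. Since the pyramids $\{P_u : u \in V(D)\}$ have pairwise disjoint internal-vertex sets and our rearrangement only moves $P_u$-internal steps, the transformations for different pyramids commute, so I may fix a single $P_u$ and apply the argument iteratively.

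Fix $P_u$, let $c^*$ be the first configuration at which $P_u$ is saturated in $S'$, and let $T$ be the (unique) time $S'$ places a pebble on the sink of $P_u$. Construct $S''$ by (i) keeping $S'$ unchanged up to $c^*$; (ii) immediately thereafter performing the canonical layer-by-layer pyramid climb from the cut at $c^*$ up to the sink of $P_u$, removing each internal pebble as soon as its successor is produced; (iii) finally replaying the tail of $S'$ from just after $c^*$, dropping any move that places a pebble on a $P_u$-internal vertex, removes a pebble from a $P_u$-internal vertex, or places the (now already-placed) pebble on the sink of $P_u$. I would then verify validity: the climb is valid by the classical pyramid pebbling construction, and the replayed tail remains valid because no vertex outside $P_u$ has a $P_u$-internal vertex as a predecessor in $D'$, so no tail move depends on a dropped pebble; meanwhile the pebble on the sink of $P_u$ -- the only interface between the pyramid and the rest of $D'$ -- is in place throughout the replay.

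The main obstacle is showing that the peak pebble count of $S''$ does not exceed that of $S'$. Let $p^{\mathrm{in}}(t)$ and $p^{\mathrm{out}}(t)$ be the numbers of pebbles on $P_u$-internal vertices and on the rest of $D'$ respectively. Two structural facts drive the analysis: (a) by the one-shot rule, $S'$ cannot drop its cut of $P_u$ between $c^*$ and $T$ without losing the ability to pebble the sink of $P_u$, so $p^{\mathrm{in}}_{S'}(t) \geq d-1$ for all $t \in [c^*, T]$; and (b) to advance its cut one layer and eventually pebble the sink, $S'$ must reach $p^{\mathrm{in}}_{S'}(\tau) \geq d$ at some moment $\tau \in [c^*, T]$.

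I would compare the peak of $S''$ to that of $S'$ configuration-by-configuration. During the climb in $S''$ we have $p^{\mathrm{in}}_{S''} \leq d$ and $p^{\mathrm{out}}_{S''} = p^{\mathrm{out}}_{S'}(c^*)$, while during the subsequent replay we have $p^{\mathrm{in}}_{S''} \leq 1$ (only the sink) and $p^{\mathrm{out}}_{S''}$ matches $p^{\mathrm{out}}_{S'}$ at corresponding later times, where $p^{\mathrm{in}}_{S'} \geq d-1$ provides slack that absorbs any outside-count differences. The delicate remaining inequality, $p^{\mathrm{out}}_{S'}(c^*) \leq p^{\mathrm{out}}_{S'}(\tau)$, I would establish by an exchange on $S'$ that pushes outside removals in $[c^*, \tau]$ to occur after $\tau$; this rearrangement preserves validity because the only outside-inward dependencies must go through the sink of $P_u$ (which is not pebbled until at least $\tau$), and it weakly decreases the outside count at earlier times. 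Combining the climb and replay bounds yields $\mathrm{cost}(S'') \leq \mathrm{cost}(S')$, and iterating the normalization over all pyramids produces a globally normal-form strategy of no greater cost, as claimed.
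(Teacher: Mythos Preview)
Your high-level plan---freeze the outside at the saturated configuration, climb to the sink of $P_u$, then replay the tail with $P_u$-internal steps deleted, and iterate over pyramids---matches the paper's sketch. The validity argument and the replay bound are fine; for the latter you only need that \emph{some} cut persists in $[c^*,T)$, i.e.\ $p^{\mathrm{in}}_{S'}\ge 1$, which is immediate from the one-shot constraint.

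The problems are in the cost bound for the climb. Claim~(a) is false as stated: ``the cut persists'' does not imply ``the cut has size at least $d-1$''. Once $S'$ advances, say, from the layer-$1$ cut (size $d-1$) to the layer-$2$ cut (size $d-2$) and discards the layer-$1$ pebbles, the internal count is $d-2$ yet a cut is still present and the sink is still reachable. More seriously, the exchange you propose for the inequality $p^{\mathrm{out}}_{S'}(c^*)\le p^{\mathrm{out}}_{S'}(\tau)$ goes the wrong direction: postponing outside removals from $[c^*,\tau]$ to after $\tau$ \emph{raises} $p^{\mathrm{out}}$ on $(c^*,\tau]$ and may raise the peak of $S'$. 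So the inequality you obtain is for a modified strategy whose cost can exceed that of the original, and bounding $\mathrm{cost}(S'')$ against this modified strategy does not yield $\mathrm{cost}(S'')\le\mathrm{cost}(S')$.

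The paper's sketch avoids this detour: it compares the climb directly to the single configuration $c^*$, asserting that at the first saturated configuration there are $d-1$ internal pebbles and that the canonical climb never exceeds that internal budget, so every climb configuration is already dominated by $c^*$ itself. Your argument would be repaired by establishing and using this structural fact about the first cut, rather than searching for a later time $\tau$ and then trying to reconcile outside counts via an exchange on $S'$.
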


  \begin{proof}[Proof Sketch]
    At a saturated configuration $c'$, there must be $d-1$ pebbles on
    internal nodes of $P_u$. If we subsequently pebble the sink of
    $p$, we will never use more than $d-1$ pebbles on internal nodes
    of $p$, and all other pebbles on the graph stay as they were. Thus
    the normal form does not use more pebbles than the original
    strategy.  Furthermore, since the internal nodes of a pyramid are
    only used to pebble the sink of this pyramid, we have not lost
    anything by pebbling through to the sink and removing the other
    internal black pebbles.
  \end{proof}

  From now on we will assume that the pebbling $S'$ of $D'$ has the
  above normal form. That is, if a configuration is saturated (with
  respect to a pyramid $P_u$), the next thing that happens in $S'$ is
  to pebble the sink of $P_u$. (After pebbling the sink, we will have
  not touched whatever pebbles were on the source nodes of $P_u$, and
  we will have a pebble on the sink node of $P_u$, and no other
  internal pebbles on $P_u$.)
 
  Our strategy for constructing a pebbling, $S$, of $D$, given a 
  normal form pebbling, $S'$ of $D'$ is as follows.
  For each node $v$ of $D$, pebble $v$ whenever it is first pebbled in $S'$,
  and remove the pebble from $v$ as soon as all successors
  of $v$ (in the original graph $D$) are pebbled.
  We want to argue that this pebbling strategy of $D$ is not greater
  than that of $D'$. To see this, we will use the following Lemma.

  \begin{lemma}
    \label{lemma:frugal}
    In any frugal read-once black strategy of a size $d$ pyramid, the
    number of pebbles on the pyramid at any point in time, up until
    all sources are pebbled for the first time, must be equal to the
    number of sources in that pyramid that have been pebbled so far.
  \end{lemma}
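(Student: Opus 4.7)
The plan is to proceed by induction on the number of pebbling moves, maintaining the invariant that at every step of the initial phase the number of pebbles on the pyramid equals the number of pebbled sources. The base case of zero moves is trivial.

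The source case of the inductive step is immediate: a source has no predecessors in the pyramid, so placing a pebble on it adds exactly one pebble to the board, cannot make any other node eligible for removal, and simultaneously increments the pebbled-source count by one, preserving the invariant.

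The substantive case is placing a pebble on an internal node $v = (i,j)$. Directly after the placement, each of $v$'s two predecessors $(i-1,j)$ and $(i-1,j+1)$ becomes eligible for frugal removal precisely when its other successor ($(i,j-1)$ resp.\ $(i,j+1)$) is already in the pebbled set $P$ or does not exist ($j=0$ resp.\ $j=d-i-1$). To maintain the invariant we need the net change in pebbles to be zero, i.e., exactly one predecessor must be removed. The ``at least one'' direction I plan to extract directly from the frugality hypothesis, interpreted as the property that an internal node is only pebbled when its placement forces the removal of at least one predecessor pebble (otherwise the placement strictly increases the pebble count for no reason and would be wasteful).

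The main obstacle is the converse: that \emph{at most} one removal can happen during the initial phase. I plan to argue by contradiction; suppose both predecessors become removable upon placing $v$, and consider the generic subcase where both $(i,j-1)$ and $(i,j+1)$ lie in $P$. Look at the earlier moment when $(i,j+1)$ was itself pebbled: at that time $v$ had not yet been placed, so the predecessor $(i-1,j+1)$ of $(i,j+1)$ could not have been removed by frugality (its successor $v$ was still unpebbled), forcing the other predecessor $(i-1,j+2)$ to be the one removed, which in turn requires $(i,j+2)$ to already be in $P$ at that time or to not exist. Iterating this propagation to the right along row $i$ yields $(i,j+1),(i,j+2),\ldots,(i,d-i-1) \in P$, and a symmetric propagation starting from $(i,j-1)$ gives $(i,0),(i,1),\ldots,(i,j-1) \in P$. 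Thus every node of layer $i$ except $v$ itself lies in $P$; but $(i,0) \in P$ forces the sources $0,1,\ldots,i$ to have been pebbled and $(i,d-i-1) \in P$ forces $d-i-1,\ldots,d-1$ to have been pebbled, so every source of the pyramid has been pebbled, contradicting the initial-phase assumption. The boundary subcases (where $j=0$ or $j=d-i-1$) go through by the same recursive propagation, with the non-existence of the missing sibling providing the base case on one side.
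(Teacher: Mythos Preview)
Your inductive approach is genuinely different from the paper's, but it rests on an interpretation of ``frugal'' that is not standard and that the paper never asserts. In the pebbling literature, frugality constrains only removal: a black pebble is lifted the moment all its successors carry pebbles. It does \emph{not} constrain placement, and in particular it does not forbid placing a pebble on an internal node when neither predecessor thereby becomes removable. Concretely, in the size-$4$ pyramid the sequence that pebbles sources $(0,1),(0,2)$ and then the internal node $(1,1)$ is a legal frugal read-once prefix; after the third move there are three pebbles on the board but only two pebbled sources, so the exact-equality invariant already fails. Since your ``at least one removal'' hypothesis is also invoked recursively inside the propagation argument for ``at most one'' (you use it when analyzing the earlier placement of $(i,j+1)$), the entire inductive step collapses without it.

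The paper proceeds by a completely different, static argument, and in fact only establishes the inequality that is actually needed downstream, namely that the pebble count is \emph{at least} the number of pebbled sources. Given a configuration in which the pebbled sources form a contiguous block of size $d'$, it exhibits $d'$ pairwise-disjoint southwest-going diagonals inside the corresponding sub-pyramid $P'$; the top vertex $v_i$ of each diagonal has a successor lying just outside $P'$ that must still be pebbled in the future (its other predecessor is an as-yet-unpebbled source), and the read-once constraint then forces a live pebble somewhere on that diagonal right now. Disjointness of the diagonals yields $d'$ pebbles. No frugality is used, only read-once. Your propagation idea is elegant, and if you want to salvage the plan you would need either to prove (rather than assume) that no ``zero-removal'' internal placement can occur in the strategies under consideration, or to drop the equality invariant and argue the $\ge$ direction directly, as the paper does.
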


  Assuming the above Lemma it is clear that if $S'$ is a normal form
  pebbling of $D'$, then for any pyramid $p$ in $D'$, and any
  configuration $c'$, if there are $k$ pebbles on $P_u$ at $c'$, then
  in the corresponding configuration $c$ of $D$, there are at most $k$
  pebbles on source nodes of $P_u$.  To see this, first notice that by
  the above Lemma, anytime a pyramid is being pebbled in $D'$ up until
  the time when all source nodes of the pyramid are pebbled for the
  first time, the number of pebbles on the pyramid will be at least as
  large as the number of source nodes in $D$ that contain pebbles.
  Then by the normal form property of $D'$, as soon as all source
  nodes of $D'$ are pebbled for the first time, the strategy pebbles
  the sink of $D'$, and thus again the number of corresponding pebbles
  on $D$ is never greater than the number of pebbles on $D'$.

  \begin{proof}[Proof of Lemma~\ref{lemma:frugal}]
    Let $P$ be a size $d$ pyramid graph, and let $S$ a one-shot black
    pebbling of $P$. Let $c$ be a configuration occuring in $S$ such
    that the set of source nodes that have been pebbled up to $c$ are
    the source nodes of $P'$, where $P'$ is a size $d'$ sub-pyramid
    of $P$.  We want to argue that $c$ must contain at least $d'$
    pebbles.  Assume without loss of generality that $P'$ is the
    leftmost sub-pyramid of $P$, of size $d' < d$.  Label the outer
    rightmost vertices of $P'$ by $v_{d-1}, ..., v_0$, where $v_{d-1}$ is
    the sink vertex of $P'$, and for all $i < d-1$, $v_i$ is the rightmost
    vertex in $P'$ at level $i$. 
    Corresponding to each named vertex $v_i$ is a
    diagonal set of vertices, $diag(v_i)$, beginning at $v_i$ and
    travelling southwest to a source vertex of $P'$.  Note that
    the sets $diag(v_i)$ are pairwise disjoint. We will argue
    that for each $i$, $0 \leq i \leq d-1$, at least one vertex from
    $diag(v_i)$ must appear in $c$.  To see this, first notice that
    for each $v_i$, there is a vertex $v_i'$ that is an immediate
    successor of $v_i$ and that lies outside of $P'$. This vertex
    $v_i'$ must be pebbled at some time after configuration $c$, since
    it has a predecessor that has not yet been pebbled.  But in order
    to pebble $v_i'$ in the future, there must be a black pebble on
    some vertex in $diag(v_i)$ in $c$.  Thus, we have shown that if
    $c$ is any configuration in $S$ such that $d' <d$ source vertices
    are pebbled thus far, then there must be $d'$ vertices pebbled in
    $c$.
\end{proof}

\section{Conclusion and Open Problems}  \label{sec:conclusion}

We proved SSE-hardness of approximation for a variety of graph
problems.  Most importantly we obtained the first inapproximability
result for the treewidth problem.  

Some remarks are in order.  The status of the SSE conjecture is, at
this point in time, very uncertain, and our results should therefore
not be taken as absolute evidence that there is no polynomial time
approximation for (e.g.) treewidth.  However, at the very least, our
results do give an indication of the difficulty involved in obtaining
such an algorithm for treewidth, and builds a connection between these
two important problems.  We also find it remarkable how simple our
reductions and proofs are.  We leave the choice of whether to view
this as a healthy sign of strength of the SSE Conjecture, or whether
to view it as an indication that the conjecture is too strong, to the
reader.

There are many important open questions and natural avenues for
further work, including:

\begin{enumerate}
\item
  It seems plausible that these results can be extended to a wider
  range of graph layout problems.  For instance, our two choices of
  aggregators $\max$ and $\Sigma$ can be viewed as taking $\ell_{\infty}$
  and $\ell_{1}$ norms, and it seems likely that the results would apply
  for any $\ell_{p}$ norm (though we are not aware of any previous
  literature studying such variants).

\item 
  It would be nice to obtain hardness of approximation result for our
  problems based on a weaker hardness assumption such as UGC. It is
  conjectured in \cite{RST10} that the SSE conjecture is equivalent to
  UGC.   Alternatively, it would be nice to show that hardness of some of our
  problems imply hardness for the SSE Problem.

\item
  For pebbling, it would be very interesting to obtain results for the
  unrestricted pebbling problems (for which finding the exact pebbling
  cost is even PSPACE-hard).  As far as we are aware, nothing is known
  for these problems, not even, say, whether one can obtain a
  non-trivial approximation in NP.  As mentioned in the introduction,
  we are currently working on extending our one-shot pebbling results
  to bounded time pebblings.  We have some preliminary progress there
  and are hopeful that we can relax the pebbling results to a much
  larger class of pebblings.

\end{enumerate}


\bibliographystyle{alpha}
\bibliography{references}

\end{document}